\newtheorem{theorem}{Theorem}
\newtheorem{lemma}[theorem]{Lemma}
\newtheorem{corollary}[theorem]{Corollary}
\newtheorem{claim}[theorem]{Claim}
\newtheorem*{remark}{Remark}
\renewcommand{\ge}{\geqslant}
\renewcommand{\le}{\leqslant}
\newcommand{\eps}{\varepsilon}
\DeclareMathOperator{\poly}{poly}
\DeclareMathOperator{\polylog}{polylog}
\DeclareMathOperator{\alg}{ALG}
\DeclareMathOperator{\RR}{\mathbb{R}}
\newcommand{\cW}{\mathcal{W}}
\newcommand{\errpr}{\lambda}
\newcommand{\clog}[1]{\lceil \log #1 \rceil}
\newcommand{\ceile}[1]{\lceil #1 \rceil}
\newcommand{\floore}[1]{\lfloor #1 \rfloor}
\newcommand{\ceilelr}[1]{\left\lceil #1 \right\rceil}
\newcommand{\floorelr}[1]{\left\lfloor #1 \right\rfloor}
\DeclareMathOperator{\dimn}{dim}
\newcommand{\indx}{\textsc{index}\xspace}
\begin{document}
\title{Fully-Dynamic Coresets}
\author{
  Monika Henzinger\footnote{The research leading to these results has received
    funding from the European Research Council under the European Union's Seventh
    Framework Programme (FP/2007-2013) / ERC Grant Agreement no. 340506.} \\
  University of Vienna,\\
  Faculty of Computer Science\\
  \texttt{monika.henzinger@univie.ac.at}
  \and
  Sagar Kale\footnote{Fully supported by Vienna Science and Technology Fund
    (WWTF) through project ICT15-003.} \\
  University of Vienna,\\
  Faculty of Computer Science\\
  \texttt{sagar.kale@univie.ac.at}
}

\date{}

\maketitle

\begin{abstract}
  With input sizes becoming massive, coresets---small yet representative summary
  of the input---are relevant more than ever.  A weighted set $C_w$ that is a
  subset of the input is an $\eps$-coreset if the cost of any feasible solution
  $S$ with respect to $C_w$ is within $[1 {\pm} \eps]$ of the cost of $S$ with
  respect to the original input.  We give a very general technique to compute
  coresets in the fully-dynamic setting where input points can be added or
  deleted.  Given a static (i.e., not dynamic) $\eps$-coreset-construction
  algorithm that runs in time $t(n, \eps, \errpr)$ and computes a coreset of
  size $s(n, \eps, \errpr)$, where $n$ is the number of input points and
  $1 {-}\errpr$ is the success probability, we give a fully-dynamic algorithm
  that computes an $\eps$-coreset with worst-case update time
  $O((\log n) \cdot t(s(n, \eps/\log n, \errpr/n), \eps/\log n, \errpr/n) )$
  (this bound is stated informally), where the success probability is
  $1{-}\errpr$.  Our technique is a fully-dynamic analog of the merge-and-reduce
  technique, which is due to Har-Peled and Mazumdar~\cite{har_peled_mazumdar04}
  and is based on a technique of Bentley and Saxe~\cite{bentley_saxe80}, that
  applies to the insertion-only setting where points can only be added.
  Although, our space usage is $O(n)$, our technique works in the presence of an
  adaptive adversary, and we show that $\Omega(n)$ space is required when
  adversary is adaptive.

  As a concrete implication of our technique, using the result of Braverman et
  al.~\cite{braverman_etal_16}, we get fully-dynamic $\eps$-coreset-construction
  algorithms for $k$-median and $k$-means with worst-case update time
  $O(\eps^{-2}k^2\log^5 n \log^3 k)$ and coreset size
  $O(\eps^{-2}k\log n \log^2 k)$ ignoring $\log \log n$ and $\log(1/\eps)$
  factors and assuming that $\eps = \Omega(1/\poly(n))$ and
  $\errpr = \Omega(1/\poly(n))$ (which are very weak assumptions made only to
  make these bounds easy to parse).  This results in the first fully-dynamic
  constant-approximation algorithms for $k$-median and $k$-means with update
  times $O(\poly(k, \log n, \eps^{-1}))$.  The dependence on $k$, specifically,
  is only quadratic, and the bounds are worst-case.  The best previous bound for
  both problems was \emph{amortized} $O(n\log n)$ by Cohen-Addad et
  al.~\cite{cohen_addad_etal19} via randomized $O(1)$-coresets in $O(n)$ space.

  We also show that under the OMv conjecture~\cite{HenzingerKNS15}, a
  fully-dynamic $(4 - \delta)$-approximation algorithm for $k$-means must either
  have an amortized update time of $\Omega(k^{1-\gamma})$ or amortized query
  time of $\Omega(k^{2 - \gamma})$, where $\gamma > 0$ is a constant.
\end{abstract}


\thispagestyle{empty}
\newpage
\clearpage
\setcounter{page}{1}

\section{Introduction}
\label{sec:intro}
Clustering is an ubiquitous notion that one encounters in computer-science areas
such as data mining, machine learning, image analysis, bioinformatics, data
compression, and computer graphics, and also in the fields of medicine, social
science, marketing, etc.  Today, when the input data has become massive, one
would rather run an algorithm on a small but representative summary of the
input, and for clustering problems, a \emph{coreset} serves that function
perfectly.  The concept of a coreset was defined first in computational geometry
as a small subset of a point set that approximates the shape of the point set.
The word coreset has now evolved to mean an appropriately weighted subset of the
input that approximates the original input with respect to solving a
computational problem.

Let $P$ be a problem for which the input is a weighted subset\footnote{When
  using a set operation such as union or notation such as $\subseteq$ with one
  or more weighted sets, we mean it for the underlying unweighted sets.  Also,
  all weights are nonnegative.}  $X_w \subseteq U$; think of $U$ as in a metric
space $(U, d)$, so $U$ is unweighted and $d$ is the distance function.  Let
$n := |X_w|$ and $W := \sum_{x \in X_w}w(x)$.  We also refer to elements of $U$
as points.  The goal in the problem $P$ then is to output $S^*$ that belongs to
the feasible-solution space (or \emph{query} space) $Q$ such that the cost
$c(S^*, X_w)$ is minimized.  For example, in the \emph{$k$-median}
(respectively, \emph{$k$-means}) problem, $Q$ is the set of all (unweighted)
subsets of $X_w$ of cardinality at most $k$ and
$c(S, X_w) := \sum_{x \in X_w} w(x)\min_{s \in S}d(x, s)$ (respectively,
$\sum_{x \in X_w} w(x)\min_{s \in S}(d(x, s))^2$).  Then, for the problem $P$, a
weighted set $C_w$ such that $C_w \subseteq X_w$ is an \emph{$\eps$-coreset} if,
for any feasible solution $S \in Q$, we have that
$c(S, X_w) \in [1 {\pm} \eps] c(S, C_w)$; we sometimes say that the
\emph{quality} of coreset $C_w$ is $\eps$.  For many problems, fast
coreset-construction algorithms exist; e.g., for $k$-median and $k$-means,
$\tilde{O}(nk)$-time\footnote{Logarithmic factors are hidden in the $\tilde{O}$
  notation.} algorithms for computing $\eps$-coresets of size
$O(\eps^{-2}k\polylog(n))$ exist.

Throughout the paper, we assume that the cost function $c$ for the problem $P$
is \emph{linear}: for any weighted subsets $Y^1_w, Y^2_w \subseteq U$ with
disjoint supports and any $S \in Q$, we have that
$c(S, Y^1_w \cup Y^2_w) = c(S, Y^1_w) + c(S, Y^2_w)$, where the union
$Y^1_w \cup Y^2_w$ is the weighted union.  It is easy to see that $k$-median and
$k$-means cost functions are linear.

Our goal in this paper is to give \emph{dynamic algorithms} for computing a
coreset.  In the dynamic setting, the input changes over time.  A dynamic
algorithm is a data structure that supports three types of operations:
\emph{Insert$(p, w)$}, which inserts a point $p$ with weight $w$ into $X_w$;
\emph{Delete$(p)$}, which removes point $p$ from $X_w$; and \emph{Query$()$},
which outputs a coreset of $X_w$.  Weight updates can be simulated by deleting
and re-inserting a given point, or the data structure may support a separate
weight-changing operation.  This is known as the \emph{fully} dynamic model as
opposed to the \emph{insertion-only} setting where a point can only be inserted.
At any time instant, a coreset is maintained by the algorithm, and the
complexity measure of interest is the \emph{update} time, i.e., how fast the
solution can be updated after receiving a point update, and also the \emph{size}
of the coreset, which determines the \emph{query} time.  Suppose there is a
dynamic coreset-construction algorithm, say $\alg_D$, for a problem $P$. Then a
solution for the problem $P$ can be maintained dynamically by running $\alg_D$,
and on query, a solution is computed by querying $\alg_D$ and running a static
(i.e., not dynamic) algorithm for $P$ on the returned coreset.  In this paper,
we give a very general technique on how to maintain a coreset in the
fully-dynamic setting: given a \emph{static} coreset-construction algorithm for
any problem $P$, we show how to turn it into a dynamic coreset-construction
algorithm for $P$.

Intuitively, our technique is to the fully-dynamic setting as the
merge-and-reduce technique is to the insertion-only setting.  The
\emph{merge-and-reduce} technique, which is based on a technique of Bentley and
Saxe~\cite{bentley_saxe80}, is due to Har-Peled and
Mazumdar~\cite{har_peled_mazumdar04} and is a fundamental technique to obtain an
insertion-only coreset-construction algorithm using a static
coreset-construction algorithm, say $\alg_S$, as a black box.  Loosely speaking,
it is as follows.  At any time instant, the algorithm maintains up to $\clog{n}$
buckets.  For $i \in \{1, 2, \ldots, \clog{n}\}$, the bucket $B_i$ has capacity
$2^{i-1}$, each bucket can be either \emph{full}, (i.e. at capacity $2^{i-1}$)
or \emph{empty}, and each point goes in exactly one bucket.  Then at any
time-instant, the current number of points uniquely determines the states of the
buckets.  Whenever a point is inserted, the states of the buckets change like a
binary counter.  That is, the new point goes into $B_i$, where $B_i$ is the
smallest-index empty bucket, and all the points in $\cup_{j=1}^{i-1} B_j$ are
moved to $B_i$ (\emph{merge}). Note that this creates a full bucket $B_i$.  Then
a coreset is computed on $B_i$ by running $\alg_S$ on it (\emph{reduce)}.  The
overall coreset is then just union of all non-empty buckets.

We show that a similar result can be achieved in the fully-dynamic setting.  Our
main result is the following theorem (stated slightly informally).
\begin{theorem}
  Assume that there is a static coreset construction algorithm for a problem $P$
  with linear cost function that \textbf{a)}~runs in time
  $t_P(n_s, \eps_s, \errpr_s, W_s)$, \textbf{b)}~always outputs a set of
  cardinality at most $s_P(\eps_s, \errpr_s, W_s)$ and total weight at most
  $(1{+} \delta)W_s$, and \textbf{c)}~has the guarantee that the output is an
  $\eps_s$-coreset with probability at least $1{-} \errpr_s$, where $n_s$ is the
  number of \emph{integer}-weighted input points and $W_s$ is the total weight
  of points.

  Then there is a fully-dynamic coreset-construction algorithm for $P$ that,
  with \emph{rational}-weighted input points, \textbf{a)}~always maintains an
  output set of cardinality at most $s_P(\eps, \errpr, W)$, \textbf{b)}~has the
  guarantee that the output is an $\eps$-coreset with probability at least
  $1{-} \errpr$, and \textbf{c)}~has worst-case update time
  \[
    O\left((\log n) \cdot t_P\left(s_P^*, \frac{\eps}{\log n}, \frac{\errpr}{n},
        W\right)\right)\,,
\]
where $n$ is the current number of points,
$W = O((1{+} \delta)^{\clog{n}}\poly(n))$, and
$s_P^* = s_P\left(\frac{\eps}{\log n}, \frac{\errpr}{n}, W\right)$.
\end{theorem}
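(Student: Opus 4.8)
The plan is to build a balanced binary tree over the current point set and maintain, at every node, a coreset of the multiset of points stored in the leaves of its subtree; the query then returns the coreset stored at the root. Concretely, I would keep the $n$ points at the leaves of a balanced binary search tree of height $O(\log n)$ (e.g.\ a weight-balanced or $(a,b)$-tree), so that an \emph{Insert} or \emph{Delete} touches only $O(\log n)$ nodes on a root-to-leaf path (plus $O(\log n)$ nodes affected by rebalancing, which I will need to argue are few). For each internal node $v$ with children $v_L, v_R$, I store a set $C_v$ that is a coreset of $C_{v_L} \cup C_{v_R}$, computed by running the static algorithm $\alg_S$ on that weighted union; a leaf simply stores its own point. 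When a point update changes a leaf, I recompute $C_v$ bottom-up for every ancestor $v$ of that leaf, which is $O(\log n)$ invocations of $\alg_S$ per update; combined with whatever the rebalancing scheme costs (which must also be $O(\log n)$ recomputations) this gives the claimed $O(\log n)\cdot t_P(\cdots)$ update time.

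The correctness argument has two halves. First, \emph{accuracy}: I would show by induction on the height that $C_v$ is an $(\eps'\cdot \mathrm{height}(v))$-coreset of the true point multiset under $v$, where $\eps' = \Theta(\eps/\log n)$. The base case is a leaf (exact). For the inductive step, $C_{v_L}$ is an $\eps' h$-coreset and $C_{v_R}$ is an $\eps' h$-coreset of their respective true sets; since the cost function $c$ is \emph{linear}, $C_{v_L}\cup C_{v_R}$ is then an $\eps' h$-coreset of the true union $X_L \cup X_R$ (this is the standard observation that coresets compose additively under a linear cost). Running $\alg_S$ on $C_{v_L}\cup C_{v_R}$ introduces a further multiplicative $(1\pm\eps')$ factor, and $(1+\eps' h)(1+\eps') \le 1 + \eps'(h{+}1) + O(\eps'^2 h)$, so with $h \le \clog{n}$ and $\eps'$ chosen as $\eps/(2\clog{n})$ the root is an $\eps$-coreset. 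Second, \emph{probability}: each node invokes $\alg_S$ once after each recomputation, and I want every invocation over the lifetime of the data structure (or at least every invocation whose output currently "lives" in the tree) to succeed; there are at most $n$ nodes each holding one current coreset, so setting the per-call failure probability to $\errpr/n$ and union-bounding gives overall success probability $\ge 1-\errpr$. (Here is where the adaptive-adversary claim in the abstract matters: because the structure stores $\Theta(n)$ points explicitly and recomputes from scratch, fresh randomness can be used on every recomputation, so the adversary's knowledge of past outputs does not help.)

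Two technical points need care. The first is the \emph{weight blow-up}: condition (b) of the static algorithm only promises total weight $\le (1+\delta)W_s$, so a coreset at height $h$ can have weight up to $(1+\delta)^h W$; this is exactly why the theorem statement carries $W = O((1+\delta)^{\clog n}\poly(n))$ as the weight parameter fed to $t_P$ and $s_P$ — I would just track this bound up the tree and plug the worst case (the root) into the resource bounds. The second is the \emph{integer-vs-rational weights} mismatch: $\alg_S$ wants integer-weighted points, but rebalancing and re-weighting produce rationals. I would handle this by a global scaling argument — multiply all weights by a common denominator bounded by $\poly(n)$ (absorbed into the $\poly(n)$ factor in $W$), or equivalently observe that the number of distinct denominators introduced over the tree is controlled — and argue the coreset guarantee is scale-invariant. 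The size bound (a) is immediate: the root's stored set has cardinality $\le s_P(\eps',\errpr/n,W) \le s_P(\eps,\errpr,W)$ since $s_P$ is monotone in its first two arguments.

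I expect the main obstacle to be the \emph{rebalancing analysis}: I must pick a balanced-tree scheme in which a single Insert/Delete forces only $O(\log n)$ subtree coresets to be rebuilt, even though a rotation at a high node would naively require recomputing a coreset over a near-linear-size subtree. The fix is to never rebuild a subtree coreset from the raw points but only from the children's coresets, so a rotation at node $v$ costs one $\alg_S$ call at $v$ plus re-propagation up the $O(\log n)$ ancestors of $v$; but then I must bound how many rotations (or node splits/merges, if I use a $B$-tree style structure) a single update triggers, and ensure this is $O(\log n)$ in the \emph{worst case}, not amortized — which points toward a de-amortized weight-balanced tree or a worst-case $(a,b)$-tree. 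Getting this worst-case bound while keeping the height $O(\log n)$ and the per-update recomputation count $O(\log n)$ is the delicate part; everything else is the routine induction sketched above.
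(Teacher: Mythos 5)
Your high-level skeleton (a logarithmic-depth tree of coresets, composition via linearity, a union bound with per-call failure probability $\errpr/n$) matches the paper, but there are several genuine gaps, two of which are exactly where the paper's real work lies. First, you never address that the per-node parameters $\eps_s=\Theta(\eps/\log n)$ and $\errpr_s=\errpr/n$ depend on $n$, which changes over time. A coreset computed at some node when the point count was $n_{\mathrm{old}}$ and never recomputed since still carries accuracy $\eps/(2\clog{n_{\mathrm{old}}})$ and failure probability $\errpr/n_{\mathrm{old}}$; once $n$ has grown, the tree is deeper than $\clog{n_{\mathrm{old}}}$ and the root accuracy exceeds $\eps$, and the union bound over the (now more numerous) nodes exceeds $\errpr$. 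Rebuilding everything when $n$ doubles only gives an amortized bound; the paper's solution is a phase-based \emph{refresher} routine with two pointers that sweep the leaves, rerunning the $\alg_S$ instances on two extra leaf-to-root paths per update with parameters based on an upper bound $n_p=\Theta(n)$ valid for the current and next phase---this is what makes the bound worst-case, and your proposal has no substitute for it. Relatedly, the obstacle you flag as ``the delicate part'' (rotations in a weight-balanced or $(a,b)$-tree forcing recomputations, in the worst case) is not resolved in your sketch, whereas the paper sidesteps it entirely: it maintains a \emph{complete} binary tree, handles a deletion by swapping the deleted leaf with the rightmost leaf at the lowest level (heap-style), so every update touches only two leaf-to-root paths and no rotations ever occur.

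Two further concrete errors. Your size argument has the monotonicity backwards: $s_P$ is \emph{nonincreasing} in its accuracy and failure-probability arguments, so the root's set has cardinality $s_P(\eps/\Theta(\log n),\errpr/n,W)\ge s_P(\eps,\errpr,W)$, not $\le$; the paper obtains the stated cardinality by running one additional \emph{outer} $\alg_S$ instance on the root's output with parameters $\eps/3,\errpr/2$ (and accounts for the extra $\eps/3$ loss via the composition lemma). Finally, your fix for rational weights---``multiply all weights by a common denominator bounded by $\poly(n)$''---fails: the least common denominator of $n$ rationals whose individual denominators are $\poly(n)$-bounded can be exponential in $n$, and even with integer-weight-preserving $\alg_S$, children recomputed in different phases carry different denominators. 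The paper instead rounds each weight individually to a fixed denominator of the form $n_p^{c+1}\ceile{1/\eps}$ (Lemma~\ref{lem:round}), takes the common multiple of the at most two denominators alive at any time, and shows the accumulated rounding error is only an additive $O(\eps/n)$ per point, with the total numerator bounded by $(1+\delta)^{\clog{n_p}}\poly(n_p)\ceile{1/\eps}$---this is where the $W$ parameter and the extra word-size factor in the update time come from, and it cannot be replaced by a global rescaling.
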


We mention below a concrete implication of the above theorem for $k$-median and
$k$-means using the result of Braverman et al.~\cite{braverman_etal_16}.
\begin{theorem}
  \label{thm:kmedkmeans}
  For the $k$-median and $k$-means problems, there is a fully-dynamic algorithm
  that maintains a set of cardinality
  $O(\eps^{-2}k(\log n \log k \log(k\eps^{-1}\log n) + \log (1/\errpr)))$, that
  is an $\eps$-coreset with probability at least $1{-} \errpr$, and has
  worst-case update time
  $O\left(\eps^{-2} k^2\log^5n\log^3 k \log^2(1/\eps) (\log\log n)^3\right)$,
  assuming that $\eps = \Omega(1/\poly(n))$ and $\errpr = \Omega(1/\poly(n))$.
  \footnote{We make these very weak assumptions to simplify some extremely
    unhandy factors involving $\eps$ and $\errpr$ in the expression for the
    update time.}
\end{theorem}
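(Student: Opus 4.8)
The plan is to instantiate the general reduction established above with the static $\eps$-coreset construction of Braverman et al.~\cite{braverman_etal_16} for $k$-median and $k$-means. First I would recast their guarantees in the parametrization required by our main theorem: applied to $n_s$ integer-weighted points of total weight $W_s$, their sensitivity-sampling construction runs in time $t_P(n_s,\eps_s,\errpr_s,W_s) = O\big(n_s k\cdot\polylog(n_s,k,1/\eps_s,1/\errpr_s,W_s)\big)$ and, with probability at least $1-\errpr_s$, returns an $\eps_s$-coreset of size $s_P(\eps_s,\errpr_s,W_s) = O\big(\eps_s^{-2}k(\log W_s\,\log k\,\log(k/\eps_s)+\log(1/\errpr_s))\big)$, the precise logarithmic factors being those of~\cite{braverman_etal_16}. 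Two points need checking before substituting. (i) Condition~\textbf{b)} of our main theorem asks that the output have total weight at most $(1{+}\delta)W_s$: the returned weights are rescaled importance-sampling weights whose sum has expectation exactly $W_s$ and is a sum of independent bounded nonnegative terms, so a Bernstein/Chernoff bound puts it in $[(1{-}\delta)W_s,(1{+}\delta)W_s]$ with failure probability absorbed into $\errpr_s$ (alternatively one rescales by the realized sum, losing only $O(\delta)$ in quality, which we fold into $\eps_s$). (ii) The construction reads weighted points directly rather than expanding a weight-$w$ point into $w$ copies, so its running time depends on $W_s$ only polylogarithmically; this matters because our main theorem charges $W_s$ to $t_P$.

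Granting this, I would substitute $\eps_s=\eps/\log n$, $\errpr_s=\errpr/n$, and $W_s=W$, where our main theorem gives $W=O((1{+}\delta)^{\clog n}\poly(n))$. Choosing $\delta=\Theta(1/\log n)$ makes $(1{+}\delta)^{\clog n}=O(1)$ (and even a small constant $\delta$ keeps it $\poly(n)$), so in all cases $W=\poly(n)$ and hence $\log W=O(\log n)$. The output cardinality guaranteed by our main theorem is then $s_P(\eps,\errpr,W)=O\big(\eps^{-2}k(\log n\,\log k\,\log(k/\eps)+\log(1/\errpr))\big)$, and using $\log(k/\eps)=O(\log(k\eps^{-1}\log n))$ this matches the claimed bound $O(\eps^{-2}k(\log n\log k\log(k\eps^{-1}\log n)+\log(1/\errpr)))$.

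For the update time our main theorem gives $O\big((\log n)\cdot t_P(s_P^{\ast},\eps/\log n,\errpr/n,W)\big)$ with $s_P^{\ast}=s_P(\eps/\log n,\errpr/n,W)$. Under the stated weak assumptions $\eps=\Omega(1/\poly(n))$ and $\errpr=\Omega(1/\poly(n))$ (so that $\log(1/\errpr_s)=\log(n/\errpr)=O(\log n)$, $\log(1/\eps_s)=O(\log\log n+\log(1/\eps))$, and $\log W=O(\log n)$), one gets $s_P^{\ast}=O(\eps^{-2}k\log^3 n\log k\log(k\eps^{-1}\log n))$, hence $t_P(s_P^{\ast},\dots)=O(s_P^{\ast}\cdot k\cdot\polylog)$; multiplying by the outer $\log n$ and collecting the remaining $\log k$, $\log(1/\eps)$, and $\log\log n$ factors yields the claimed worst-case update time $O\big(\eps^{-2}k^2\log^5 n\log^3 k\log^2(1/\eps)(\log\log n)^3\big)$. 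The success probability $1-\errpr$ is immediate from our main theorem.

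I expect the only genuinely non-routine step to be part~(i) above---verifying that the Braverman et al.\ construction can be taken to satisfy the bounded-total-output-weight condition---together with pinning down the exact polylogarithmic factors of~\cite{braverman_etal_16} precisely enough that the nested-logarithm simplifications in the last paragraph produce exactly the exponents $\log^5 n$, $\log^3 k$, $\log^2(1/\eps)$, $(\log\log n)^3$ stated in the theorem. Everything else is arithmetic bookkeeping of logarithms.
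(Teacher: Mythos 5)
There is a genuine gap: you treat Theorem~\ref{thm:main} as a black box, but its hypothesis requires the static algorithm to return an \emph{integer}-weighted coreset on integer-weighted input, and the construction of Braverman et al.~\cite{braverman_etal_16} outputs \emph{fractional} weights. This is exactly why the paper cannot invoke Theorem~\ref{thm:main} directly and instead extends its proof. The issue you flag as the ``only genuinely non-routine step''---condition (b) on the total output weight---is the easy part (the paper simply uses $\delta = O(\eps)$); the real difficulty is that fractional output weights, fed from each internal node into its parent, must be re-rounded at every level, and one has to control both the quality loss and the bit-length of the resulting numerators and denominators. The paper handles this by scaling the input weights of each $\alg_S$ call by the threshold $s'$ (so every output weight is at least $1$), rounding each output weight via Lemma~\ref{lem:roundr} to have denominator $\ceile{(\log n_p)/\eps}$ (an additive quality loss of $2\eps/\log n_p$ per level, tolerable only because $\eps_s = O(\eps/\log n_p)$ already), and then proving Lemma~\ref{lem:denom}: at level~$i$ every denominator divides $(n_p n_{pp})^{c+1}\ceile{1/\eps}\bigl(s'_p s'_{pp}\ceile{(\log n_p)/\eps}\ceile{(\log n_{pp})/\eps}\bigr)^i$, so all numbers are bounded by $\poly(n)\,(k\log n/\eps)^{O(\log n)}$ and each weight fits in $O(\log(k\eps^{-1}\log n))$ machine words. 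None of this appears in your proposal, and without it your accounting is wrong: you assert $W = \poly(n)$ and hence $\log W = O(\log n)$, whereas the correct bound is $\log W_p = O(\log n \cdot \log(k\eps^{-1}\log n))$, and it is this larger $\log W_p$ (not an extra $\log(k/\eps_s)$ factor in the size function of~\cite{braverman_etal_16}, which the paper states as $O(\eps_s^{-2}k(\log k \log W_s + \log(1/\errpr_s)))$) that produces the $\log(k\eps^{-1}\log n)$ factor in the coreset cardinality; the per-weight word count $m = O(\log(k\eps^{-1}\log n))$ also enters the update time.

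A secondary inaccuracy: the alternative fix you suggest for condition (i), choosing $\delta = \Theta(1/\log n)$, treats $\delta$ as a tunable parameter, but it is a property of the static algorithm; in any case $(1{+}O(\eps))^{\clog{n}} = \poly(n)$ already, so this was never the bottleneck. Your overall instantiation strategy (plug the guarantees of~\cite{braverman_etal_16} into the dynamic reduction, then simplify logarithms under $\eps,\errpr = \Omega(1/\poly(n))$) is the right skeleton and matches the paper's intent, but the integer-versus-fractional weight obstruction and the accompanying rounding and denominator-growth analysis are the substantive content of this proof, and they are missing.
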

Ignoring $\log \log n$ and $\log(1/\eps)$ above, the coreset cardinality is
$O(\eps^{-2}k\log n \log^2 k)$ and worst-case update time is
$O(\eps^{-2}k^2\log^5 n \log^3 k)$.  It can be easily proved that running an
$\alpha$-approximation algorithm for $k$-median on an $\eps$-coreset gives a
$2\alpha(1{+}\eps)$-approximation whereas that for $k$-means gives a
$4\alpha(1{+}\eps)$-approximation.  Any such polynomial-time static
algorithm---say, e.g., $(5+\eps')$-approximation algorithm for $k$-median by
Arya et al.~\cite{aryagkmmp_04}
and $16$-approximation algorithm for $k$-means by Gupta and
Tangwongsan~\cite{guptat_08}---can be run on our output coreset in
$O(\poly(k, \log n, \eps^{-1}))$ time to obtain a constant approximation.  This
is the first fully-dynamic constant-approximation algorithm for $k$-median and
$k$-means whose \emph{worst-case} time per operation is polynomial in $k$,
$\log n$, and $\eps^{-1}$.  The best previous result was a randomized algorithm
with \emph{amortized} $O(n\log n)$ update time and $O(n)$ space by Cohen-Addad
et al.~\cite{cohen_addad_etal19}.

With a simple reduction, we also show a conditional lower bound on the time per
operation for $k$-means.  The following theorem is proved as
Theorem~\ref{thm:lbomv} in Section~\ref{sec:lower}.
\begin{theorem}
  Let $\gamma > 0$ be a constant.  Under the OMv
  conjecture~\cite{HenzingerKNS15}, for any $\delta > 0$, there does not exist a
  fully-dynamic algorithm that maintains a $(4 - \delta)$-approximation for
  $k$-means with amortized update time
  $O(k^{1-\gamma})$ and query time $O(k^{2-\gamma})$ such that over a polynomial
  number of updates, the error probability is at most $1/3$.
\end{theorem}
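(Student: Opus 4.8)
The plan is to reduce from the Online Boolean Matrix--Vector multiplication problem in the form usually called $\mathrm{OuMv}$: after polynomial-time preprocessing of a Boolean $K\times K$ matrix $M$, one receives pairs $(u_t,v_t)\in\{0,1\}^K\times\{0,1\}^K$ for $t=1,\dots,K$ one at a time and must output the bit $u_t^{\top}Mv_t$ before the next pair arrives; under the OMv conjecture $\mathrm{OuMv}$ requires total time $K^{3-o(1)}$, even allowing error probability $1/3$. From $M$ we build a metric on $O(K)$ points, set the number of centers to $k:=K$, and simulate round $t$ by $O(K)$ Insert/Delete operations followed by one Query, which we take to return a value $\hat c$ with $\mathrm{OPT}\le\hat c\le(4-\delta)\,\mathrm{OPT}$ for the current $k$-means instance (exactly what ``maintains a $(4-\delta)$-approximation for $k$-means'' provides). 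If such an algorithm had amortized update time $O(k^{1-\gamma})$ and query time $O(k^{2-\gamma})$, the whole simulation would take $O\big(K\cdot(K\cdot K^{1-\gamma}+K^{2-\gamma})\big)=O(K^{3-\gamma})$ time, contradicting the $K^{3-o(1)}$ bound since $\gamma>0$ is a constant. Only $O(K^2)=\poly(k)$ operations are issued, so the ``error at most $1/3$ over polynomially many updates'' guarantee delivers all $K$ answers correctly with probability $\ge2/3$, as $\mathrm{OuMv}$ allows.

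The gadget is the shortest-path metric of a weighted graph, so the triangle inequality is automatic. It has a fixed \emph{source} vertex $s$ carrying a point of weight $W$ (any $W>0$, say $W=1$), and fixed vertices $z_1,\dots,z_K$ each carrying a point of weight $W$, with an edge $s z_i$ of length $2$ for every $i$. ``Activating column $j$'' inserts a weight-$0$ point at a new vertex $p_j$ with an edge $s p_j$ of length $\eps_0$; ``activating row $i$'' inserts a weight-$0$ point at a new vertex $q_i$ with an edge $q_i z_i$ of length $1$; and for every $(i,j)$ with $M_{ij}=1$ the graph contains an edge $p_j q_i$ of length $\eps_0$, present exactly when both $p_j$ and $q_i$ are present. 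Here $\eps_0=\eps_0(\delta)$ is a small constant to be fixed. In round $t$ we activate the columns in $\mathrm{supp}(v_t)$ and the rows in $\mathrm{supp}(u_t)$ ($O(K)$ insertions), Query, then delete those $O(K)$ points to restore the base configuration.

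The heart of the argument is the claim that the optimal (discrete) $k$-means cost is $W(1+4\eps_0^2)$ when $u_t^{\top}Mv_t=1$ and $4W$ when $u_t^{\top}Mv_t=0$. There are $K+1$ weight-$W$ points but only $k=K$ centers, so in any solution some weight-$W$ point shares its center with another weight-$W$ point; all distances among $\{s,z_1,\dots,z_K\}$ are at least $2$ and no input point lies near the midpoint of any such pair, so the cost is at least $4W$ \emph{unless} some $d(s,z_i)$ has dropped below $2$. A path $s\to p_j\to q_i\to z_i$ of length $1+2\eps_0$ is present precisely when there exist $i\in\mathrm{supp}(u_t)$ and $j\in\mathrm{supp}(v_t)$ with $M_{ij}=1$ --- that is, precisely when $u_t^{\top}Mv_t=1$ --- and then placing one center at $q_i$ serves both $s$ (cost $4\eps_0^2 W$) and $z_i$ (cost $W$) with the other $K-1$ centers on the remaining $z$'s; a short check shows this is optimal. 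When $u_t^{\top}Mv_t=0$ no such path exists, every $d(s,z_i)$ stays $2$, and the optimum is exactly $4W$ (the weight-$0$ points never contribute). Choosing $\eps_0$ small enough that $(4-\delta)(1+4\eps_0^2)<4$, the returned value lies in $\big[W(1+4\eps_0^2),(4-\delta)W(1+4\eps_0^2)\big]$ in the first case and in $[4W,(4-\delta)4W]$ in the second; these are disjoint, so comparing $\hat c$ to a threshold strictly between them recovers $u_t^{\top}Mv_t$ and finishes the reduction.

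The main obstacle is getting this gadget right: the factor-$4$ gap must be essentially tight (so a $(4-\delta)$- but not a $4$-approximation separates the cases), the triangle inequality must hold, and the cost drop must occur exactly when $u_t^{\top}Mv_t=1$. Using a shortest-path metric handles the triangle inequality; keeping the ``deficit'' at exactly one weight-$W$ point makes the only relevant question whether a single distance $d(s,z_i)$ falls from $2$ to about $1$, which squares to the $4$-versus-$1$ ratio; and routing the shortcut through a length-$\eps_0$ edge that is live iff $M_{ij}=1$ and both endpoints are activated turns the $\mathrm{OR}$ structure $u_t^{\top}Mv_t=\bigvee_{i,j}(u_t)_i M_{ij}(v_t)_j$ into a single global cost threshold, since any one live edge already creates the shortcut. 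What remains is the routine case analysis for the optimum and the bookkeeping of distances under insertions and deletions.
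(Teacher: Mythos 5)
Your outer framework is the same as the paper's (reduce from OuMv with one round simulated by $O(K)$ updates and one query, and use the $(4-\delta)$-approximate value to separate a cost of roughly $1$ from a cost of $4$), but the metric gadget has a genuine flaw: you let the edge $p_jq_i$ be ``present exactly when both $p_j$ and $q_i$ are present,'' so the distance between the two \emph{permanently present} weight-$W$ points $s$ and $z_i$ changes as other points are inserted and deleted. That is not a legitimate instance of dynamic $k$-means: in the model (as in the paper), the metric space $(U,d)$ is fixed once and for all, and updates only insert or delete points of $U$ into the current set $X_w$; a correct dynamic algorithm is entitled to assume $d$ never changes, so no contradiction with OMv follows from fooling an algorithm on a time-varying metric. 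Moreover, the problem cannot be patched by simply fixing the metric to the shortest-path metric of the full graph (all $p_j,q_i$ and all edges always there, points merely inactive): the triangle inequality then forces $d(s,z_i)\le d(s,p_j)+d(p_j,q_i)+d(q_i,z_i)=1+2\eps_0$ for every row $i$ of $M$ containing a $1$, \emph{independently of the query vectors}, so in the $u^TMv=0$ case the optimum is about $W$ rather than $4W$ and your gap disappears. The root cause is that your ``shortcut'' has both endpoints ($s$ and $z_i$) always in the input, so any fixed metric must commit to the short distance permanently.

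The paper's construction avoids exactly this: the only pairs that can be at distance $1$ are a row-point $i$ and a column-point $N+j$ with $M_{i,j}=1$, and \emph{both} of these are inserted only in the current round (together with $2N+1-p$ padding points at distance $100$, so that exactly $k+1$ points are present and pigeonhole forces some present point to pay at least $2^2=4$ when $u^TMv=0$, while a designated point pays $1^2$ when $u^TMv=1$). The distances are fixed for all time; only presence encodes the round. If you rework your gadget so that every potentially short distance has both endpoints among the points inserted in the current round (and restore the ``one point too many'' count via padding rather than via a fixed hub $s$), you essentially recover the paper's proof. A further, minor, point: your use of weight-$0$ points makes the lower bound apply only to algorithms that accept such weights, whereas the paper's instance is unweighted and hence gives a formally stronger statement.
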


\paragraph{Our technique}
At the core, our technique is simple.  We always maintain a balanced binary tree
of depth $\clog{n}$ containing exactly $n$ leaf nodes (recall that $n$ is the
current number of points).  Each node corresponds to a subset of $X_w$, the
current input: each leaf node corresponds to a singleton (hence $n$ leaf nodes),
and an internal node corresponds to the weighted union of the sets represented
by its children.  If the cardinality of the union exceeds a certain threshold,
then we use the static coreset-construction algorithm to compute its coreset.
The root gives a coreset of the whole input.

We next explain how we handle updates in this data structure.  Point
insertions are straightforward: create a new leaf node and run all the
static-algorithm instances at the nodes on the leaf-to-root path. The way we
handle point deletions is similar in spirit to the way delete-min works in a
min-heap data structure: whenever a point at leaf-node $\ell_d$ is deleted, we
swap contents of $\ell_d$ with those of the rightmost leaf-node, say $\ell_r$,
and delete $\ell_r$, thus maintaining the balance of the tree.  Then we run all
the static-algorithm instances at the nodes on the two affected leaf-to-root
paths.

However, there are some complications that require new techniques to make it
work in \emph{worst-case} time.  To maintain guarantees for the output coreset
quality and overall success probability, we need to adapt the parameters
$\eps_s$ and $\errpr_s$ used for the static algorithm at the internal nodes.
The problem is that both depend on $n$, which changes over time and thus might
become outdated.  To show an \emph{amortized} update-time bound, we can simply
rerun the static algorithms at all internal nodes whenever $n$ has changed by a
constant factor.  To achieve our worst-case bound, we use two \emph{refresh
  pointers} that point at leaf nodes, and after each update operation, we rerun
using the new values of $\eps_s$ and $\errpr_s$ all the static-algorithm
instances at the nodes on the leaf-to-root path from the leaf nodes pointed to
by the refresh pointers.  This keeps the outputs of the static-algorithm
instances at the internal nodes always fresh.  After every update, we move these
pointers to the right so that they point to the next leaf nodes.

Further complications are caused by fractional weights at the leaf nodes and
fractional intermediate-output weights.  A problem arises when the weights in
$X_w$ are fractional, and the static algorithm expects integer-weighted
input~\cite{chen_09}.  Even if the static algorithm can handle fractional
weights~\cite{feldmanl_11,braverman_etal_16}, there can be a problem.  The
output of the static algorithm at an internal node is the input for the static
algorithm at its parent.  Na{\"i}vely feeding these output fractional weights
directly to the static algorithm at the parent may result in numbers exponential
in $n$ near the root, thus prohibitively increasing the update time.  To deal
with these problems, rounding is needed for the input, i.e., at the leaf nodes,
as well as for each intermediate-output at an internal node.  Thus, we propose a
more sophisticated rounding scheme and show that the rounding errors accumulated
by our rounding are not too high.

We note that our balanced binary-tree data structure may be used to get dynamic
algorithms in the following situations.  Let
$f:\RR^{\dimn} \rightarrow \RR^{\dimn}$ be a \emph{multi-valued} function.
Suppose for any $u$ and $v$ with disjoint supports and for any $f_u \in f(u)$
and $f_v \in f(v)$, we have $f_u + f_v \in f(u + v)$.  Also suppose that
$f(f(v)) \subseteq f(v)$ for any $v$.  Now, given input $v$, we want to compute
some vector in $f(v)$.  If there is a static algorithm for this, then using our
technique, we can maintain some vector in $f(v)$ for a dynamically changing
vector $v$.  The allowed dynamic operation on $v$ is ``add $a$ to the $i$th
coordinate of $v$,'' where $a \in \RR$.  The resulting dynamic algorithm is fast
if the static algorithm always outputs a ``small'' vector; this is true for
coresets because coresets are small by nature.  Thinking about coresets in the
above language, each point is an identity vector in $\RR_+^{|U|}$, and then each
weighted set of points naturally identifies with a vector.  An $\eps$-coreset
reduces the number of points drastically.  Union of coresets of two disjoint
sets is a coreset of the union of those two sets (see
Lemma~\ref{lem:coresetunion}).  Although an $\eps$-coreset of an $\eps$-coreset
is not an $\eps$-coreset, it is a $(2\eps + \eps^2)$-coreset (see
Lemma~\ref{lem:coresetcomposition}).

\paragraph{Space}
In the merge-and-reduce technique, a bucket $B_i$ will not actually contain
$2^{i-1}$ points but just a coreset of $2^{i-1}$ points that would have been
there otherwise at any time instant.  Thus, using space just $\clog{n}$ times
the coreset size for a bucket, one can get a coreset of the whole
input~\cite{har_peled_mazumdar04}.  This makes it also applicable in the more
restricted streaming model, where the input points arrive in a sequence and the
goal is to compute a coreset using sublinear space.  In the fully-dynamic
setting, deletions also need to be handled, and hence no deterministic or
randomized algorithm against an adaptive adversary that stores only a coreset is
possible: the adversary generating the input could simply ask a query and then
delete all points in the returned coreset.  Hence, an algorithm that does not
store any information about the non-coreset points would not be able to maintain
a valid coreset.  Even though we store all the points in our fully-dynamic
technique, i.e., its space usage is $O(n)$, it works against an adaptive
adversary because we never make any assumption about the next update and perform
each update independently of all previous updates.  By a straightforward
reduction from the communication problem of $\indx$, we show that $\Omega(n)$
space is required in the presence of an adaptive adversary.  The proof of the
following theorem appears in Section~\ref{sec:lower}.
\begin{theorem}
  A fully-dynamic algorithm that obtains any bounded approximation for
  $1$-median or $1$-means that works in the presence of an adaptive adversary
  and has success probability $1 - 1/(8n^2)$ must use $\Omega(n)$ space, where
  $n$ is the current number of points.
\end{theorem}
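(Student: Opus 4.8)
The plan is to reduce from the one-way communication problem $\indx$: Alice holds $x\in\{0,1\}^m$, Bob holds $i\in[m]$, Alice sends a single message to Bob, who must output $x_i$, and it is classical that any public-coin protocol with error probability at most $1/3$ forces Alice's message to have $\Omega(m)$ bits. So suppose $\mathcal A$ is a fully-dynamic algorithm for $1$-median (resp.\ $1$-means) that, at every step, with probability at least $1-1/(8n^2)$ over its coins reports a value within a fixed factor $c\ge 1$ of the optimal cost of the current point set; I want to show $\mathcal A$ must use $\Omega(n)$ space. If on a \emph{Query} $\mathcal A$ instead returns a feasible solution or a $c$-approximate (i.e.\ bounded-quality) coreset, one reads off from it a cost that is a fixed-factor approximation, so this form of the guarantee is without loss of generality.

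The reduction: the metric is the real line $\RR$ with unit-weight points, and I fix an integer $N>c$. Given $x$, Alice runs $\mathcal A$ on the updates ``insert $P_0$ at location $0$'' and, for each $j\in[m]$, ``insert $P_j$ at location $1$ if $x_j=0$ and at location $N$ if $x_j=1$'', and then transmits the memory contents of $\mathcal A$---at most $O(s)$ bits, where $s$ is its space usage while holding $m+1$ points---to Bob. Using the protocol's public coins as $\mathcal A$'s randomness, Bob continues running $\mathcal A$ from that state: he calls \emph{Delete}$(P_j)$ for every $j\in[m]\setminus\{i\}$ and then \emph{Query}. Only $P_0$ (at $0$) and $P_i$ (at $1$ or $N$) remain, and on a two-point set the optimal $1$-median cost is exactly $d(P_0,P_i)$ and the optimal $1$-means cost is $d(P_0,P_i)^2$, regardless of the restriction that solutions be input points. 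Hence the true cost is $1$ when $x_i=0$ and is $N$ or $N^2$ when $x_i=1$; since $N>c$ and (because $c\ge 1$) also $N^2>c$, a correct $c$-approximation lies in $[1,c]$ in the first case and exceeds $c$ in the second, so Bob outputs $0$ exactly when the reported value is $\le c$.

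It remains to bound the protocol's error. The run consists of $m+1$ insertions, $m-1$ deletions, and one query---$O(m)$ operations, with never more than $m+1$ points present---so a union bound (or directly the whole-run form of the guarantee, if that is what ``success probability $1-1/(8n^2)$'' means) makes every operation, in particular the final query, behave as promised with probability at least $2/3$. Thus this is a valid $\indx$ protocol with communication $O(s)$, giving $s=\Omega(m)=\Omega(n)$ since $n=m+1$. Moreover Bob's updates depend only on $i$ and never on any output of $\mathcal A$, so the reduction is oblivious; the bound therefore holds even against a non-adaptive adversary, and a fortiori against the adaptive one in the statement.

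I do not expect a genuinely hard step here---this is why the authors call it a straightforward reduction---but the points that need care are: (i) choosing the gap parameter $N$ so the multiplicative slack $c$ is swallowed \emph{simultaneously} for $1$-median (cost $1$ vs.\ $N$) and $1$-means (cost $1$ vs.\ $N^2$), which is exactly why ``$N>c$ with $c\ge 1$'' is the right condition; (ii) observing that the deletions are what make the argument work at all---after only insertions a query reveals the cost of all $m+1$ points, which pins down no single bit $x_i$, consistent with polylogarithmic space sufficing in the insertion-only/streaming regime; and (iii) converting whatever a \emph{Query} actually returns (a value, a solution, or a coreset) into a fixed-factor cost estimate, which is routine.
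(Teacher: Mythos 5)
Your reduction has a genuine gap in this paper's model. Here a point is an element of the metric space $U$, and the update \emph{Delete}$(p)$ must specify $p$ itself, i.e., its location; there are no insertion-time handles or labels separate from locations. In your construction the location of $P_j$ encodes precisely the unknown bit $x_j$ (location $1$ if $x_j=0$, location $N$ if $x_j=1$), so Bob cannot issue \emph{Delete}$(P_j)$ for $j\neq i$: he does not know which of the two locations to delete, and since all zero-bit points coincide at location $1$ and all one-bit points at location $N$, the data structure only holds weighted points at locations $0$, $1$, $N$, so ``delete $P_j$'' is not even well defined. This is not a repairable technicality of notation: your concluding claim that the bound holds even against a non-adaptive adversary is too strong---when deletions are specified by coordinates and the update sequence is oblivious, polylogarithmic-space coreset constructions supporting insertions and deletions do exist (the dynamic-streaming algorithms of Frahling and Sohler cited in the paper), which would contradict an oblivious $\Omega(n)$ bound of the kind you assert.

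The paper's proof sidesteps exactly this issue and is structured differently: it encodes the bits as presence or absence of points with known identities (Alice inserts the set of indices $j$ with $x_j=1$, all pairwise distances equal to $1$), and it never uses the approximation factor or any cost gap. Instead it uses that the problem is ``compliant''---a query returns a nonempty feasible solution that is a subset of the current input---so Bob queries, learns some actual input points, deletes exactly those returned points, and repeats until $\emptyset$ is returned, thereby enumerating $X$ and checking whether his index $I$ lies in it. These deletions depend on the algorithm's answers, which is precisely why the theorem is restricted to algorithms that work against an adaptive adversary, and why the error is union-bounded over up to $N$ queries using the $1-1/(8n^2)$ per-query guarantee (your error accounting, with a single final query, is not where the difficulty lies). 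To repair your argument you would have to either adopt this presence/absence encoding with adaptive query-and-delete, or move to a model in which deletions refer to insertion-order handles---which is not the model of this paper.
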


\paragraph{Comparison to prior work}
Our technique is close to the sparsification technique of Eppstein et
al.~\cite{eppstein_etal97} that is used to speed up dynamic graph algorithms.
There, one has to assume that the number of vertices in the input graph, say
$n_v$, does not change, but the edge set changes dynamically, and the bounds are
obtained in terms of $n_v$ and $m$, the current number of edges.  Their dynamic
edge-tree structure is based on a fixed vertex-partition tree.  In the
vertex-partition tree, a node at level~$i$ corresponds to a vertex-set of
cardinality $n_v/2^i$, and a vertex-set at a node is a union of its children's
vertex sets (cf. our technique).  To start using the edge tree, the
vertex-partition tree has to be built first and hence the knowledge of $n_v$ is
necessary.  Neither do we need such a fixed structure nor any information about
the number of points.  Also, in the sparsification technique, there is no analog
of weight handling/rounding.  Another crucial difference is that they do not use
a routine analogous to our refresh-pointers routine because their internal-node
guarantees are always fresh.  As we discussed before, these refresh pointers are
critical for us also in making sure that the error introduced by the unavoidable
rounding of output weights of the static-algorithm instances is kept in check.

Ali Mehrabi recently pointed out to us that the paper by  Har-Peled and Mazumdar~\cite{har_peled_mazumdar04} gives a fully-dynamic algorithm that maintains coresets.  Specifically, they state
that for $k$-median and $k$-means in Euclidean metric in $\RR^{\dimn}$, there
exists an algorithm to compute an $\eps$-coreset with amortized update time
$O(k \eps^{-\dimn} \log^{\dimn + 2} n \log (k \log n/\eps) + k^5 \log^{10}n)$.
This result is based on the dynamization technique of Agarwal et
al.~\cite{agarwal_etal_04}; our technique can be seen as an extension of this
dynamization technique.  They also maintain a balanced binary tree, but this
tree is rebuilt after a constant fraction of updates. Thus, their running time guarantees
are only amortized, whereas we crucially use refresh pointers to get the
worst-case bounds.  Furthermore, as described, Agarwal et al.'s technique works only when the
input is integer weighted and the static algorithm outputs an integer weighted
coreset.

\subsection{Further related Work}
\label{sec:related-work}

For $k$-median and $k$-means, the first coreset-construction algorithms were by
Har-Peled and Mazu\-mdar~\cite{har_peled_mazumdar04} for Euclidean metrics and
by Chen~\cite{chen_09} for general metrics.  Improved algorithms computing
smaller coresets were subsequently obtained by Har-Peled and
Kushal~\cite{har_peledk_07} and by Feldman and Langberg~\cite{feldmanl_11}.  The
current known best is by Braverman et al.~\cite{braverman_etal_16}:
$O(\eps^{-2}k\log k \log n)$-size coresets in $\tilde{O}(nk)$ time, who also
give an excellent summary of the literature on coresets that we highly
recommend.  Note that by merge-and-reduce technique, each improvement also gave
rise to better (insertion-only) streaming coreset constructions.  For $k$-median
and $k$-means, Frahling and Sohler~\cite{frahlings_05} gave the first
coreset-construction algorithm in the dynamic-streaming setting where points can
be added or removed.  It uses space and update time of
$O(\poly(\eps^{-1}, \log m, \log \Delta))$ for constant $k$ and $\dimn$ when the
points lie in the discrete Euclidean metric space $\{1,\ldots,\Delta\}^{\dimn}$;
for $k$-median, this was recently improved to
$O(\eps^{-2}k\poly(\dimn, \log \Delta))$ space and update time of
$O(\poly(\eps^{-1}, k, \dimn, \log \Delta))$ by Braverman et
al.~\cite{bravermanflsy_17}.  Coreset constructions with improvements in certain
parameters in the Euclidean settings have been
obtained~\cite{feldmanss_13,sohlerw_18}.

The $k$-median and $k$-means problems have received significant attention in the
algorithms community~\cite{charikargts_02, jainv_01, jainms_02, charikarg_05,
  aryagkmmp_04, mettup_04, kanungomnpsw_04, guptat_08, lis_16, ahmadiannsw_17,
  byrkaprst_17}.  The best approximation ratio for $k$-median is $2.675+\eps$ by
Byrka et al.~\cite{byrkaprst_17} and that for $k$-means is $9+\eps$ by Ahmadian
et al.~\cite{ahmadiannsw_17}.


\section{Preliminaries}
Let us fix a problem $P$ with the input $X_w$, the set of feasible solutions
$Q$, and the linear cost function $c: Q \times \cW \rightarrow \RR_+$, where
$\cW$ is the set of all weighted subsets\footnote{To be precise: denote
  unweighted version of $X_w$ by $X'$, then $\cW$ is essentially $\RR^{X'}_+$.}
of $X_w$.  All the numbers encountered are nonnegative.

\paragraph{The computational model}
The input set $X_w$ is a weighted set of $n$ points having rational weights
whose numerators and denominators are bounded by $O(\poly(n))$.  The algorithm
works in the random access machine model with word size $O(\log n)$.  Each
memory word can be accessed in constant time.  With each update, a new point is
inserted, an existing point is deleted, or the weight of an existing point is
modified by adding or subtracting a nonnegative number.  The net weight of each
point always stays nonnegative with its numerator and denominator always bounded
by $O(\poly(n))$.

We will prove some basic lemmas about coresets.  Using these, we can take
weighted union of two coresets without any loss (Lemma~\ref{lem:coresetunion})
and take a coreset of a coreset without much loss
(Lemma~\ref{lem:coresetcomposition}).
\begin{lemma}
  \label{lem:coresetunion}
  If $C^1_w$ and $C^2_w$ are $\eps$-coresets of $X_w^1$ and $X_w^2$,
  respectively, with respect to a linear cost function $c$ such that
  $X_w^1 \cap X_w^2 = \emptyset$, then $C^1_w \cup C^2_w$ is an $\eps$-coreset
  of $X_w^1 \cup X_w^2$.
\end{lemma}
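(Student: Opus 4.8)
The plan is to unwind the definitions and use linearity of the cost function directly. Fix any feasible solution $S \in Q$. By the definition of an $\eps$-coreset, we know $c(S, X_w^1) \in [1 \pm \eps]\, c(S, C_w^1)$ and $c(S, X_w^2) \in [1 \pm \eps]\, c(S, C_w^2)$. Since $X_w^1 \cap X_w^2 = \emptyset$, the supports of $X_w^1$ and $X_w^2$ are disjoint, so by linearity $c(S, X_w^1 \cup X_w^2) = c(S, X_w^1) + c(S, X_w^2)$. I also need that the supports of $C_w^1$ and $C_w^2$ are disjoint; this follows because $C_w^1 \subseteq X_w^1$ and $C_w^2 \subseteq X_w^2$ (in the underlying-unweighted sense, per the paper's footnote convention), and $X_w^1, X_w^2$ are disjoint. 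Hence linearity also gives $c(S, C_w^1 \cup C_w^2) = c(S, C_w^1) + c(S, C_w^2)$.

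From here it is just the observation that summing two-sided multiplicative bounds preserves them: if $a_1 \in [1-\eps, 1+\eps]\, b_1$ and $a_2 \in [1-\eps, 1+\eps]\, b_2$ with all quantities nonnegative, then $a_1 + a_2 \in [1-\eps, 1+\eps](b_1 + b_2)$. Applying this with $a_i = c(S, X_w^i)$ and $b_i = c(S, C_w^i)$ yields $c(S, X_w^1 \cup X_w^2) \in [1 \pm \eps]\, c(S, C_w^1 \cup C_w^2)$. Since $S \in Q$ was arbitrary, and since $C_w^1 \cup C_w^2 \subseteq X_w^1 \cup X_w^2$ (again in the unweighted sense, from $C_w^i \subseteq X_w^i$), the weighted set $C_w^1 \cup C_w^2$ is by definition an $\eps$-coreset of $X_w^1 \cup X_w^2$.

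There is no real obstacle here; the only point requiring a moment of care is checking that the disjointness hypothesis $X_w^1 \cap X_w^2 = \emptyset$ legitimately transfers to the coresets so that linearity is applicable on the coreset side as well as the input side, and making sure the elementary fact about adding multiplicative $[1\pm\eps]$ bounds is stated cleanly (it is immediate from nonnegativity of all the costs, which the paper assumes throughout). I would present the argument in essentially the three displayed equalities/containments above, with one line for the ``sum of bounds'' fact.
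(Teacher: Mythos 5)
Your proposal is correct and follows essentially the same route as the paper's proof: apply linearity of $c$ on both the input side and the coreset side, then add the two multiplicative $[1\pm\eps]$ bounds (valid by nonnegativity of the costs). The extra care you take in noting that disjointness of $X_w^1, X_w^2$ transfers to $C_w^1, C_w^2$ is a detail the paper leaves implicit, but it is the same argument.
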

\begin{proof}
  By linearity of $c$: for any $S \in Q$, 
  \[
    c(S, X_w^1 \cup X_w^2) = c(S, X_w^1) + c(S, X_w^2) \in [1 {\pm} \eps]
    \left(c(S, C^1_w) + c(S, C^2_w)\right) = [1 {\pm} \eps] c(S, C^1_w\cup
    C^2_w)\,,
  \]
  where, recall that, $C^1_w\cup C^2_w$ is a weighted union.
\end{proof}

\begin{lemma}
  \label{lem:coresetcomposition}
  If $C'_w$ is an $\eps$-coreset of $C_w$, and $C''_w$ is a
  $\delta$-coreset of $C'_w$, both with respect to $c$, then $C''_w$ is an
  $(\eps + \delta + \eps\delta)$-coreset of $C_w$ with respect to $c$.
\end{lemma}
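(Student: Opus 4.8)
The plan is to unwind the two coreset guarantees one after the other and multiply the multiplicative factors. Concretely, fix an arbitrary feasible solution $S \in Q$. Since $C'_w$ is an $\eps$-coreset of $C_w$, we have $c(S, C_w) \in [1 {\pm} \eps]\, c(S, C'_w)$, i.e.\ $(1-\eps)\,c(S, C'_w) \le c(S, C_w) \le (1+\eps)\,c(S, C'_w)$. Since $C''_w$ is a $\delta$-coreset of $C'_w$, we have $(1-\delta)\,c(S, C''_w) \le c(S, C'_w) \le (1+\delta)\,c(S, C''_w)$.

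Then I would chain the two: for the upper bound, $c(S, C_w) \le (1+\eps)\,c(S, C'_w) \le (1+\eps)(1+\delta)\,c(S, C''_w) = (1 + \eps + \delta + \eps\delta)\,c(S, C''_w)$, which is exactly the claimed upper factor. For the lower bound, $c(S, C_w) \ge (1-\eps)\,c(S, C'_w) \ge (1-\eps)(1-\delta)\,c(S, C''_w) = (1 - \eps - \delta + \eps\delta)\,c(S, C''_w) \ge (1 - \eps - \delta - \eps\delta)\,c(S, C''_w)$, where the last inequality uses $\eps\delta \ge 0$ (all quantities are nonnegative). Hence $c(S, C_w) \in [1 \pm (\eps + \delta + \eps\delta)]\, c(S, C''_w)$, and since $S$ was arbitrary this shows $C''_w$ is an $(\eps + \delta + \eps\delta)$-coreset of $C_w$. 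I should also note $C''_w \subseteq C'_w \subseteq C_w$ so the subset requirement in the definition of a coreset is met, but this is immediate from transitivity of $\subseteq$ on the underlying unweighted sets.

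There is essentially no obstacle here; the only mild subtlety is the asymmetry between the $+\eps\delta$ we gain on the lower side versus the one we pay on the upper side, and the proof just absorbs the slack into the symmetric two-sided bound $[1 \pm (\eps+\delta+\eps\delta)]$ as above. I would keep the write-up to the two displayed chains of inequalities.
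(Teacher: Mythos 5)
Your proof is correct and matches the paper's argument essentially verbatim: both chain the two guarantees multiplicatively, expand $(1\pm\eps)(1\pm\delta)$, and absorb the $+\eps\delta$ slack on the lower side into the symmetric bound $1-\eps-\delta-\eps\delta$. No differences worth noting.
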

\begin{proof}
  For any $S \in Q$, we have
  $c(S, C_w) \in [1 {\pm} \eps] c(S, C'_w)$ and
  $c(S, C'_w) \in [1 {\pm} \delta] c(S, C''_w)$.  So,
  \[
    c(S, C_w) \ge (1{-} \eps) c(S, C'_w) \ge (1{-} \eps) (1{-} \delta) c(S, C''_w)
    = (1 - \eps - \delta + \eps\delta) c(S, C''_w)
    \ge (1 - \eps - \delta - \eps\delta) c(S, C''_w)\,,
  \] and
  $
    c(S, C_w) \le (1{+} \eps) c(S, C'_w) \le (1{+} \eps) (1{+} \delta) c(S, C''_w)
    = (1 + \eps + \delta + \eps\delta) c(S, C''_w)
  $.
\end{proof}

Let $C^1_w$ be an $\eps$-coreset of $C_w$ and $C^2_w$ be an
$\eps$-coreset of $C^1_w$.  Then we say that $C^1_w$ and $C^2_w$ are,
respectively, $1$-level and $2$-level $\eps$-coresets of $C_w$.  Extending
this notion, we define an $i$-level $\eps$-coreset to 
be an $\eps$-coreset of an $(i-1)$-level $\eps$-coreset.

\begin{lemma}
  \label{lem:ilevelcoreset}
  If $C^\ell_w$ is an $\ell$-level $\eps$-coreset of $C_w$, then $C^\ell_w$ is a
  $\left(\sum_{i = 1}^\ell\binom{\ell}{i} \eps^i \right)$-coreset of $C_w$.
\end{lemma}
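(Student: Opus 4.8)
The plan is to induct on $\ell$, using Lemma~\ref{lem:coresetcomposition} as the single-step composition rule and verifying that the binomial-sum bound is preserved under it. For the base case $\ell = 1$, an $1$-level $\eps$-coreset is just an $\eps$-coreset, and $\sum_{i=1}^{1}\binom{1}{i}\eps^i = \eps$, so the claim holds trivially. For the inductive step, suppose the statement holds for $\ell - 1$: if $C^{\ell-1}_w$ is an $(\ell-1)$-level $\eps$-coreset of $C_w$, then it is a $\beta$-coreset of $C_w$, where $\beta := \sum_{i=1}^{\ell-1}\binom{\ell-1}{i}\eps^i$. Now $C^\ell_w$ is by definition an $\eps$-coreset of $C^{\ell-1}_w$, so Lemma~\ref{lem:coresetcomposition} (with the roles $C_w \leftarrow C_w$, $C'_w \leftarrow C^{\ell-1}_w$, $C''_w \leftarrow C^\ell_w$, and the two qualities $\beta$ and $\eps$) gives that $C^\ell_w$ is a $(\beta + \eps + \beta\eps)$-coreset of $C_w$.

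It remains to check that $\beta + \eps + \beta\eps \le \sum_{i=1}^{\ell}\binom{\ell}{i}\eps^i$; in fact I expect equality. Writing $\beta + \eps + \beta\eps = \eps + (1+\eps)\beta$ and expanding,
\[
  \eps + (1+\eps)\sum_{i=1}^{\ell-1}\binom{\ell-1}{i}\eps^i
  = \eps + \sum_{i=1}^{\ell-1}\binom{\ell-1}{i}\eps^i + \sum_{i=1}^{\ell-1}\binom{\ell-1}{i}\eps^{i+1}\,.
\]
Re-indexing the last sum by $j = i+1$ turns it into $\sum_{j=2}^{\ell}\binom{\ell-1}{j-1}\eps^{j}$, and combining the $\eps^1$ terms ($\eps$ from the lone term plus $\binom{\ell-1}{1}\eps$ from the first sum would overcount, so I group carefully): the coefficient of $\eps^i$ for $2 \le i \le \ell-1$ is $\binom{\ell-1}{i} + \binom{\ell-1}{i-1} = \binom{\ell}{i}$ by Pascal's rule, the coefficient of $\eps^1$ is $1 + \binom{\ell-1}{1} = 1 + (\ell-1) = \ell = \binom{\ell}{1}$, and the coefficient of $\eps^\ell$ is $\binom{\ell-1}{\ell-1} = 1 = \binom{\ell}{\ell}$. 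Hence the whole expression equals $\sum_{i=1}^{\ell}\binom{\ell}{i}\eps^i$, completing the induction.

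The only mildly delicate point is bookkeeping in the Pascal-rule recombination — making sure the endpoint terms ($i=1$ and $i=\ell$) are handled separately from the generic range $2 \le i \le \ell-1$ so that no binomial coefficient is referenced out of range. This is routine once the re-indexing is written down, so I do not anticipate any real obstacle; the lemma is essentially a clean consequence of iterating Lemma~\ref{lem:coresetcomposition} together with the binomial theorem identity $(1+\eps)^\ell - 1 = \sum_{i=1}^{\ell}\binom{\ell}{i}\eps^i$, which is in fact the closed form of the stated bound.
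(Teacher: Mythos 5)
Your proof is correct and follows essentially the same route as the paper: induction on $\ell$ with Lemma~\ref{lem:coresetcomposition} as the composition step, followed by the identity $\eps + (1+\eps)\sum_{i=1}^{\ell-1}\binom{\ell-1}{i}\eps^i = \sum_{i=1}^{\ell}\binom{\ell}{i}\eps^i$, which the paper isolates as Lemma~\ref{lem:binom} and proves in the appendix by the same re-indexing and Pascal's-rule computation you carry out inline.
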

\begin{proof}
  The proof is by induction on $\ell$.  Base case is when $\ell = 1$, and by
  definition, a $1$-level coreset is an $\eps$-coreset.  By induction
  hypothesis, we have that $C^{\ell - 1}_w$ is a
  $\left(\sum_{i = 1}^{\ell - 1}\binom{\ell - 1}{i} \eps^i \right)$-coreset
  of $C_w$.  Now, $C^\ell_w$ is an $\eps$-coreset of $C^{\ell - 1}_w$, hence
  by Lemma~\ref{lem:coresetcomposition}, $C^\ell_w$ is an
  $\left(\eps + (1{+}\eps)\sum_{i = 1}^{\ell - 1}\binom{\ell - 1}{i} \eps^i
  \right)$-coreset of $C_w$.  Now, use Lemma~\ref{lem:binom}, which appears
  below, with $\alpha = \eps$ to finish the proof.
\end{proof}

We prove two basic lemmas.

\begin{lemma}
  \label{lem:binom}
  For any positive integer $\ell$ and $\alpha \in \RR_+$, we have
  $\alpha + (1{+} \alpha)\sum_{i = 1}^{\ell - 1}\binom{\ell - 1}{i} \alpha^i =
  \sum_{i = 1}^{\ell}\binom{\ell}{i} \alpha^i$.
\end{lemma}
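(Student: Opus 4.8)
The plan is to verify the identity
\[
  \alpha + (1{+}\alpha)\sum_{i=1}^{\ell-1}\binom{\ell-1}{i}\alpha^i
  = \sum_{i=1}^{\ell}\binom{\ell}{i}\alpha^i
\]
by a direct manipulation of the left-hand side, collecting coefficients of each power $\alpha^i$ and matching them against the binomial recurrence $\binom{\ell}{i}=\binom{\ell-1}{i}+\binom{\ell-1}{i-1}$. First I would expand $(1{+}\alpha)\sum_{i=1}^{\ell-1}\binom{\ell-1}{i}\alpha^i$ as $\sum_{i=1}^{\ell-1}\binom{\ell-1}{i}\alpha^i+\sum_{i=1}^{\ell-1}\binom{\ell-1}{i}\alpha^{i+1}$, and reindex the second sum by setting $j=i+1$ so it becomes $\sum_{j=2}^{\ell}\binom{\ell-1}{j-1}\alpha^{j}$. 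Adding back the stray $\alpha$ term, the coefficient of $\alpha^1$ on the left is $1+\binom{\ell-1}{1}=\binom{\ell-1}{0}+\binom{\ell-1}{1}=\binom{\ell}{1}$; for $2\le i\le \ell-1$ the coefficient is $\binom{\ell-1}{i}+\binom{\ell-1}{i-1}=\binom{\ell}{i}$; and the coefficient of $\alpha^\ell$ is $\binom{\ell-1}{\ell-1}=1=\binom{\ell}{\ell}$. This exhausts all terms and proves the claim.

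An alternative I would keep in mind is a slicker combinatorial/algebraic route: note that $\sum_{i=1}^{\ell}\binom{\ell}{i}\alpha^i=(1+\alpha)^\ell-1$ and likewise $\sum_{i=1}^{\ell-1}\binom{\ell-1}{i}\alpha^i=(1+\alpha)^{\ell-1}-1$, so the identity reduces to checking $\alpha+(1{+}\alpha)\bigl((1+\alpha)^{\ell-1}-1\bigr)=(1+\alpha)^\ell-1$, i.e.\ $\alpha+(1+\alpha)^\ell-(1+\alpha)=(1+\alpha)^\ell-1$, which is immediate since $\alpha-(1+\alpha)=-1$. This is essentially a one-line computation and is probably the cleanest way to present it.

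There is no real obstacle here; the only things to be careful about are the reindexing bounds (the edge cases $i=1$ and $i=\ell$, and the convention $\binom{\ell-1}{0}=1$) and the fact that the sums start at $i=1$ rather than $i=0$, which is exactly why the standalone $\alpha$ and the $-1$'s appear. I would present the binomial-theorem version as the main argument and remark that it holds for all real $\alpha$ (the hypothesis $\alpha\in\RR_+$ is only what is needed downstream in Lemma~\ref{lem:ilevelcoreset}).
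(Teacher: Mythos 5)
Your proposal is correct, and it actually contains two valid arguments. The first (expand $(1{+}\alpha)\sum_{i=1}^{\ell-1}\binom{\ell-1}{i}\alpha^i$, reindex the shifted sum, absorb the stray $\alpha$ as $\binom{\ell-1}{0}\alpha$, and apply Pascal's rule $\binom{\ell}{i}=\binom{\ell-1}{i}+\binom{\ell-1}{i-1}$) is essentially identical to the proof the paper gives in Appendix~\ref{app:lembinom}, including the same edge-case bookkeeping via $\binom{\ell-1}{0}=1$ and $\binom{\ell-1}{\ell}=0$. The version you say you would actually present---rewriting both sides via the binomial theorem as $\sum_{i=1}^{m}\binom{m}{i}\alpha^i=(1+\alpha)^m-1$ and checking $\alpha+(1{+}\alpha)\bigl((1+\alpha)^{\ell-1}-1\bigr)=(1+\alpha)^\ell-1$---is a genuinely different and cleaner route: it collapses the identity to a one-line cancellation, avoids any reindexing or coefficient comparison, and makes transparent that the hypothesis $\alpha\in\RR_+$ is not needed for the identity itself (it matters only for how the lemma is used in Lemma~\ref{lem:ilevelcoreset}, where $\alpha=\eps\ge 0$). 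What the paper's longer expansion buys in exchange is only self-containedness at the level of elementary binomial identities; either proof is fully adequate.
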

\begin{proof}[Proof idea]
  The proof is provided in Appendix~\ref{app:lembinom} and uses elementary
  identities involving binomial coefficients and algebraic manipulations.
\end{proof}

\begin{lemma}
  \label{lem:binomgeom}
  For any positive integer $\ell$ and $\alpha \in [0, 1]$, we have
  $\sum_{i = 1}^\ell\binom{\ell}{i} \left(\frac{\alpha}{2\ell}\right)^i \le
  \alpha$.
\end{lemma}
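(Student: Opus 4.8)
\textbf{Proof proposal for Lemma~\ref{lem:binomgeom}.}
The plan is to bound the sum term by term, using the crude estimate $\binom{\ell}{i} \le \ell^i/i!$ to cancel the factor $\ell^i$ hidden in $(\alpha/(2\ell))^i$, and then to exploit $\alpha \in [0,1]$ to pull out a single factor of $\alpha$. Concretely, for each $i \ge 1$ I would write
\[
  \binom{\ell}{i}\left(\frac{\alpha}{2\ell}\right)^i \le \frac{\ell^i}{i!}\cdot\frac{\alpha^i}{2^i\ell^i} = \frac{\alpha^i}{i!\,2^i} \le \frac{\alpha}{i!\,2^i}\,,
\]
where the last inequality uses $\alpha^i \le \alpha$, valid since $\alpha \in [0,1]$ and $i \ge 1$. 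Summing over $i = 1,\dots,\ell$ and extending to an infinite series only increases the right-hand side, so it suffices to show $\sum_{i=1}^{\infty}\frac{1}{i!\,2^i} \le 1$.

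For that last step I would note $\sum_{i=0}^{\infty}\frac{1}{i!\,2^i} = e^{1/2}$, hence $\sum_{i=1}^{\infty}\frac{1}{i!\,2^i} = e^{1/2} - 1 < 1$, which finishes the proof. If one prefers to avoid invoking the value of $e^{1/2}$, the tail can instead be bounded by a geometric series: the ratio of consecutive terms $\frac{1}{(i+1)!\,2^{i+1}} \big/ \frac{1}{i!\,2^i} = \frac{1}{2(i+1)}$ is at most $\tfrac16$ for $i \ge 2$, so $\sum_{i\ge 2}\frac{1}{i!\,2^i} \le \frac{1}{8}\cdot\frac{1}{1 - 1/6} = \frac{3}{20}$, giving $\sum_{i\ge 1}\frac{1}{i!\,2^i} \le \frac12 + \frac{3}{20} < 1$.

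There is essentially no obstacle here; the only point requiring care is the choice of bound on $\binom{\ell}{i}$ strong enough to kill the $\ell^i$ in the denominator while still leaving a convergent series with sum below $1$, and the factor $2^i$ in the statement is exactly what makes the residual series $\sum 1/(i!\,2^i)$ small enough. I would present the term-by-term version since it is self-contained and $\ell$-uniform.
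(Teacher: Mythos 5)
Your proof is correct and follows essentially the same route as the paper: bound $\binom{\ell}{i}$ by a power of $\ell$ to cancel the $\ell^i$ coming from $(\alpha/(2\ell))^i$, use $\alpha^i\le\alpha$ for $\alpha\in[0,1]$ and $i\ge 1$, and check that the residual numerical series is at most $1$. The only difference is that the paper uses the cruder bound $\binom{\ell}{i}\le \ell^i$, so the residual series is simply the geometric series $\sum_{i\ge 1}2^{-i}\le 1$, and no appeal to $e^{1/2}-1<1$ (or a tail estimate) is needed; your extra $1/i!$ factor is harmless but unnecessary.
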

\begin{proof}
  $
    \sum_{i = 1}^\ell\binom{\ell}{i} \left(\frac{\alpha}{2\ell}\right)^i \le
    \sum_{i = 1}^\ell\ell^i \frac{\alpha^i}{2^i \ell^i} = \sum_{i = 1}^\ell
    \frac{\alpha^i}{2^i} \le \sum_{i = 1}^\ell \frac{\alpha}{2^i} \le \alpha
  $.
\end{proof}

Now, as a corollary to Lemma~\ref{lem:ilevelcoreset}, we get the following using
Lemma~\ref{lem:binomgeom}.

\begin{corollary}
  \label{cor:ilevelcoreset}
  If $C^\ell_w$ is an $\ell$-level $(\eps/(2\ell))$-coreset of $C_w$, then
  $C^\ell_w$ is an $\eps$-coreset of $C_w$.
\end{corollary}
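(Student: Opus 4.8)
The plan is to combine Lemma~\ref{lem:ilevelcoreset} and Lemma~\ref{lem:binomgeom} directly, with no new ideas needed. First I would instantiate Lemma~\ref{lem:ilevelcoreset} with base quality $\eps/(2\ell)$ in place of $\eps$: since $C^\ell_w$ is, by hypothesis, an $\ell$-level $(\eps/(2\ell))$-coreset of $C_w$, that lemma yields that $C^\ell_w$ is a $\left(\sum_{i=1}^{\ell}\binom{\ell}{i}(\eps/(2\ell))^i\right)$-coreset of $C_w$.

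Next I would invoke Lemma~\ref{lem:binomgeom} with $\alpha = \eps$ (using the standing assumption $\eps \in [0,1]$ on coreset quality) to bound this sum by $\eps$, so that $C^\ell_w$ is a $\beta$-coreset of $C_w$ for some $\beta \le \eps$. Finally I would use the trivial monotonicity of the coreset notion in its quality parameter: for every $S \in Q$ we have $c(S, C_w) \in [1{\pm}\beta]\,c(S, C^\ell_w) \subseteq [1{\pm}\eps]\,c(S, C^\ell_w)$ because $[1{-}\beta, 1{+}\beta] \subseteq [1{-}\eps, 1{+}\eps]$, hence $C^\ell_w$ is an $\eps$-coreset of $C_w$.

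There is essentially no obstacle here; the only points requiring a word of care are checking that the hypothesis $\eps \le 1$ of Lemma~\ref{lem:binomgeom} is in force, and noting the (obvious) fact that a $\beta$-coreset with $\beta \le \eps$ is also an $\eps$-coreset.
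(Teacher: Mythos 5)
Your proposal is correct and follows exactly the route the paper intends: instantiate Lemma~\ref{lem:ilevelcoreset} with base quality $\eps/(2\ell)$, bound the resulting sum $\sum_{i=1}^{\ell}\binom{\ell}{i}(\eps/(2\ell))^i$ by $\eps$ via Lemma~\ref{lem:binomgeom} with $\alpha=\eps$, and conclude by monotonicity of the coreset quality. Your explicit note that $\eps\le 1$ is needed for Lemma~\ref{lem:binomgeom} is a reasonable point of care that the paper leaves implicit.
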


As we discussed earlier, rounding of the weights at internal nodes is needed in
our dynamic algorithm to achieve the desired worst-case update time.  Towards
that, we need two lemmas.

In the next lemma, think of $a/b$ as the original weight of the point, $c/d$ as
the weight that we want to approximate $a/b$ with, and $D$ as the cost of this
point with respect to a feasible solution in $Q$.  So the lemma says that by
rounding, the cost of the point stays within $1 \pm b/d$ of the original cost.
\begin{lemma}
  \label{lem:round}
  For positive integers $a$, $b$, and $d$, let $c = \lfloor ad/b \rfloor$.  Then
  $cD/d \in [1 \pm b/d] aD/b$ for any nonnegative real $D$.
\end{lemma}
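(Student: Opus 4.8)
The plan is to prove Lemma~\ref{lem:round} by a direct estimate on the quantity $c = \lfloor ad/b\rfloor$. First I would observe that, by definition of the floor function, we have the two-sided bound $ad/b - 1 < c \le ad/b$. Dividing through by $d$ (which is a positive integer, hence positive) gives $a/b - 1/d < c/d \le a/b$. Multiplying by the nonnegative real $D$ preserves the inequalities (if $D = 0$ the claim is trivial, so assume $D > 0$), yielding $(a/b)D - D/d < (c/d)D \le (a/b)D$.

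Next I would convert this additive bound into the desired multiplicative bound. The upper bound $(c/d)D \le (a/b)D \le (1 + b/d)(a/b)D$ is immediate since $b/d \ge 0$ and all quantities are nonnegative. For the lower bound, I need $(c/d)D \ge (1 - b/d)(a/b)D = (a/b)D - (b/d)(a/b)D$, so it suffices to show that the additive slack $D/d$ we lost is at most $(b/d)(a/b)D = (a/d)D$. Since $a \ge 1$ (as $a$ is a positive integer), we have $D/d \le aD/d$, which is exactly what is needed. Combining, $(c/d)D > (a/b)D - D/d \ge (a/b)D - (a/d)D = (1 - b/d)(a/b)D$, completing the two-sided containment $cD/d \in [1 \pm b/d]\,aD/b$.

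There is essentially no obstacle here; the only point requiring the hypothesis that $a, b, d$ are positive integers (rather than arbitrary positive reals) is the step $a \ge 1$, which is what lets us absorb the additive rounding error $D/d$ into the relative error term $(a/d)D$. I would present the argument as a short chain of inequalities without further commentary.
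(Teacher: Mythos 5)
Your proof is correct and follows essentially the same route as the paper's: both use the floor bound $c/d \le a/b \le c/d + 1/d$ together with $a \ge 1$ (so $1/d \le a/d$) to absorb the additive rounding loss into the relative error term $(b/d)\,aD/b = aD/d$. The only cosmetic difference is that you spell out the trivial upper bound and the $D=0$ case explicitly, which the paper leaves implicit.
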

\begin{proof}
  By the definition of $c$, we have that $c/d \le a/b \le c/d + 1/d$, and
  $1/d \le a/d$ because $a \ge 1$; hence $a/b \ge c/d \ge a/b - a/d$, which
  implies that $aD/b \ge cD/d \ge aD/b - aD/d = (1 - b/d) aD/b$.
\end{proof}

The proof of the following lemma is very similar.  Here, think that we
approximate the weight $r$ of a point by $\lfloor r \rfloor + c/d$ and the cost
of the point stays within $1 \pm 1/d$ of the original cost.
\begin{lemma}
  \label{lem:roundr}
  Let $r \ge 1$ be a rational number, $a$ and $b$ be positive integers such that
  $a/b = r - \lfloor r \rfloor$, $d$ be any positive integer, and
  $c = \lfloor ad/b \rfloor$. Then
  $(\lfloor r \rfloor + c/d)D \in (1 \pm 1/d)rD$ for any nonnegative real $D$.
\end{lemma}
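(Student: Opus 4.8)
The plan is to mirror the proof of Lemma~\ref{lem:round} and then invoke the hypothesis $r \ge 1$ at one key point to convert an additive error of $1/d$ into a multiplicative error of $1/d$. First I would dispose of a degenerate case: if $r$ were an integer then $r - \lfloor r \rfloor = 0$, which is incompatible with $a$ being a positive integer, so we may assume $a/b = r - \lfloor r \rfloor \in (0,1)$. In particular $\lfloor r \rfloor$ is genuinely the integer part of $r$, so $\lfloor r \rfloor + a/b = r$, and the rounded quantity $\lfloor r \rfloor + c/d$ is well defined.

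Next I would reuse the elementary bound already derived inside the proof of Lemma~\ref{lem:round}: since $c = \lfloor ad/b \rfloor$, we have $c/d \le a/b \le c/d + 1/d$, i.e.
\[
  \frac{a}{b} - \frac{1}{d} \;\le\; \frac{c}{d} \;\le\; \frac{a}{b}\,.
\]
Adding $\lfloor r \rfloor$ to all three parts and using $\lfloor r \rfloor + a/b = r$ yields the additive two-sided estimate
\[
  r - \frac{1}{d} \;\le\; \lfloor r \rfloor + \frac{c}{d} \;\le\; r\,.
\]

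The remaining step is to turn the left inequality into a relative one. Here is where $r \ge 1$ enters: it gives $1/d \le r/d$, hence $r - 1/d \ge r - r/d = (1 - 1/d)r$. Combining this with the right inequality $\lfloor r \rfloor + c/d \le r \le (1 + 1/d)r$, we obtain $\lfloor r \rfloor + c/d \in (1 \pm 1/d)r$. Multiplying through by the nonnegative real $D$ preserves both inequalities and gives $(\lfloor r \rfloor + c/d)D \in (1 \pm 1/d)rD$, which is exactly the claim.

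I do not expect any real obstacle: the argument is a three-line computation built on Lemma~\ref{lem:round}. The only point worth flagging is that the hypothesis $r \ge 1$ is used precisely (and only) to replace the additive slack $1/d$ by the multiplicative slack $1/d$; without it the same rounding would merely give an additive guarantee, which is why the lemma is stated for $r \ge 1$ (and why in the algorithm one always scales weights so that rounded values are at least $1$).
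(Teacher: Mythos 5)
Your proof is correct and follows essentially the same route as the paper's: bound $c/d$ between $a/b - 1/d$ and $a/b$, use $r \ge 1$ to replace the additive slack $1/d$ by $r/d$, add $\lfloor r \rfloor$, and multiply by $D \ge 0$. The brief remark disposing of the case of integer $r$ is a harmless extra check already implicit in the hypothesis that $a$ is a positive integer.
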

\begin{proof}
  By the definition of $c$ and using $r \ge 1$, we get that
  $a/b \ge c/d \ge a/b - r/d$; adding $\lfloor r \rfloor$ and multiplying by $D$
  finishes the proof.
\end{proof}


\section{A Dynamic Coreset}
We describe our dynamic algorithm for maintaining an $\eps$-coreset for a
problem $P$ with query space $Q$ that uses a static coreset algorithm, say,
$\alg_S$.
\begin{figure}[ht]
\centering
\begin{tikzpicture}
  [scale=0.75,minimum size=20pt,
  vert/.style={circle,draw,thick},
  alg/.style={rectangle,draw,thick}]
  \node (v1) at (2,0) [vert] {$v_1$};
  \node (v2) at (4,0) [vert] {$v_3$};
  \node (v3) at (6,0) [vert] {$v_2$};
  \node (v4) at (8,0) [vert] {$v_4$};
  \node (a1) at (3, 2) [alg] {$\alg^2_S$};
  \node (a2) at (7, 2) [alg] {$\alg^3_S$};

  \foreach \x in {1,...,2}
  \node (v2\x) at (-1 + 4*\x, 4) [vert] {$v^2_\x$};

  \node (ao) at (5, 6) [alg] {$\alg^1_S$};

  \node (o) at (5, 8) {Output};

  \draw[thick,->] (v1) -- (a1);
  \draw[thick,->] (v2) -- (a1);
  \draw[thick,->] (v3) -- (a2);
  \draw[thick,->] (v4) -- (a2);

  \draw[thick,->] (a1) -- (v21);
  \draw[thick,->] (a2) -- (v22);
  
  \draw[thick,->] (v21) -- (ao);
  \draw[thick,->] (v22) -- (ao);

  \draw[thick,->] (ao) -- (o);


\end{tikzpicture}
\caption{An $\alg_S$ node takes input from two point-nodes.  If the union of the
  sets has cardinality greater than $s'$, then the $\alg_S$ node computes a
  coreset of cardinality at most $s'$ and passes it on to the point-node above
  it (its parent).  The number of leaf nodes is always $n$, and the number of
  levels is always $O(\log n)$, where $n$ is the current number of points.}
\label{fig:dyncor}
\end{figure}
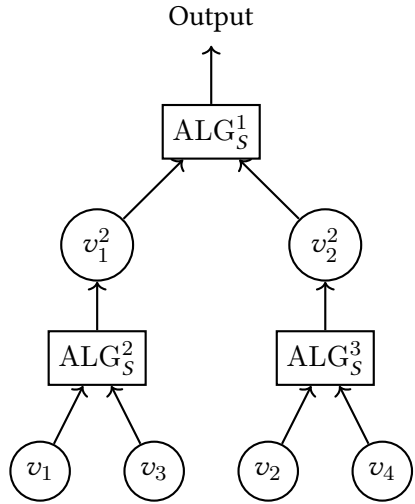


The main idea is described in Figure~\ref{fig:dyncor} using a tree with a
special structure. Each node is of one of the two types: a point-node
representing a weighted set of points or an alg-node representing an instance of
$\alg_S$.  We sometimes use a point-node to denote the point set it represents
and an alg-node to denote the $\alg_S$ instance it represents.  Each level
contains either only point-nodes or only alg-nodes.  All leaf nodes are
point-nodes and represent a weighted singleton with an input point.  Each
alg-node gets as input the weighted union of its children, and its output is
represented by its parent node (which is a point-node).  When running $\alg_S$
at an alg-node $A$, if the union of its children has cardinality larger than
$s'$, then $A$ would compute a coreset of cardinality at most $s'$ otherwise it
would just output the weighted union.  We will later fix this threshold $s'$ for
computing a coreset.  An example of how insertions and deletions are handled is
shown in Figure~\ref{fig:treeprog} (where all weights are assumed to be one).
For the ease of description, from now onwards, we will think of this tree with
alg-nodes being collapsed into their parent nodes.  Then each leaf node would
contain a weighted singleton and each internal node would contain the output of
the $\alg_S$ instance run on the weighted union of its children's sets.

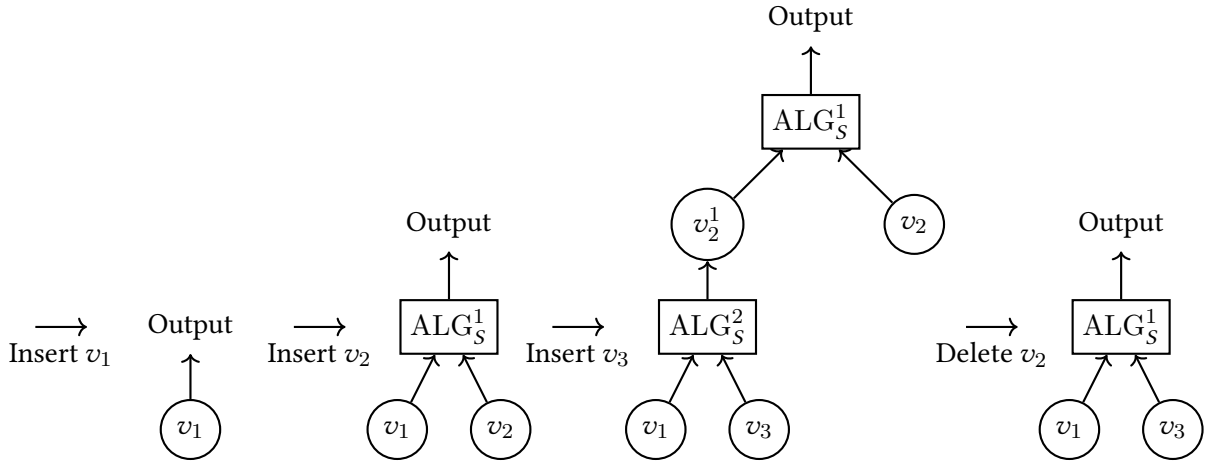
\begin{figure}[ht]
\centering
\begin{tikzpicture}
  [scale=0.68,minimum size=20pt,
  vert/.style={circle,draw,thick},
  alg/.style={rectangle,draw,thick}]
  \begin{scope}[xshift=-4cm]
    \draw[thick,->] (-1, 2) -- node[below]{Insert $v_1$} (0, 2);
    \node (v1) at (2, 0) [vert] {$v_1$};

    \node (o) at (2, 2) {Output};

    \draw[thick,->] (v1) -- (o);
  \end{scope}

  \begin{scope}
    \draw[thick,->] (0, 2) -- node[below]{Insert $v_2$} (1, 2);
    \node (v1) at (2, 0) [vert] {$v_1$};
    \node (v2) at (4, 0) [vert] {$v_2$};
    \node (a1) at (3, 2) [alg] {$\alg^1_S$};

    \node (o) at (3, 4) {Output};

    \draw[thick,->] (v1) -- (a1);
    \draw[thick,->] (v2) -- (a1);

    \draw[thick,->] (a1) -- (o);
  \end{scope}

  \begin{scope}[xshift=5cm]
    \draw[thick,->] (0, 2) -- node[below]{Insert $v_3$} (1, 2);
    \node (v1) at (2, 0) [vert] {$v_1$};
    \node (v2) at (4, 0) [vert] {$v_3$};
    \node (a1) at (3, 2) [alg] {$\alg^2_S$};

    \node (v12) at (3, 4) [vert] {$v^1_2$};
    \node (v3) at (7, 4) [vert] {$v_2$};

    \node (ao) at (5, 6) [alg] {$\alg^1_S$};

    \node (o) at (5, 8) {Output};

    \draw[thick,->] (v1) -- (a1);
    \draw[thick,->] (v2) -- (a1);

    \draw[thick,->] (a1) -- (v12);

    \draw[thick,->] (v12) -- (ao);
    \draw[thick,->] (v3) -- (ao);

    \draw[thick,->] (ao) -- (o);
  \end{scope}
  
  \begin{scope}[xshift=13cm]
    \draw[thick,->] (0, 2) -- node[below]{Delete $v_2$} (1, 2);
    \node (v1) at (2, 0) [vert] {$v_1$};
    \node (v2) at (4, 0) [vert] {$v_3$};
    \node (a1) at (3, 2) [alg] {$\alg^1_S$};

    \node (o) at (3, 4) {Output};

    \draw[thick,->] (v1) -- (a1);
    \draw[thick,->] (v2) -- (a1);

    \draw[thick,->] (a1) -- (o);
  \end{scope}

\end{tikzpicture}
\caption{An example of how insertions and deletions are handled.  We start with
  an empty tree.  The first point that is inserted is represented by $v_1$.  We
  use a point and the node that represents it interchangeably.  Then $v_2$ is
  inserted followed by $v_3$.  Next, if $v_4$ is inserted, we get exactly the
  tree shown in Figure~\ref{fig:dyncor}, and if $v_2$ is deleted, then we get
  the last tree.}
\label{fig:treeprog}
\end{figure}


We guarantee that the resulting tree then will always be a \emph{complete}
binary tree, i.e., every level except possibly the lowest is completely filled,
and the nodes at the lowest level are packed to the left.  To describe the
updates briefly, let $\ell_r$ denote the rightmost leaf node at the lowest
level; for simplicity, assume that the lowest level is not full.  Insertion is
straightforward: the new point goes in a new leaf node to the right of $\ell_r$.
For deletion of a point at leaf node $\ell_d$, if $\ell_d \neq \ell_r$, then we
replace contents of $\ell_d$ with those of $\ell_r$ and delete $\ell_r$.  See
Section~\ref{sec:heap_description} for details of these operations.  For weight
update, the tree does not change.

\begin{remark}
  Since a coreset will not be computed until a node has more than $s'$ points,
  the tree can be modified so that each leaf node corresponds to a set of
  $\Theta(s')$ points.  Then the number of nodes in the tree is $\Theta(n/s')$.
  This reduces the additional space used for maintaining this tree.  This is
  important when the number of points is very large.  See
  Section~\ref{sec:treesizereduction} for further details.  This is essentially
  the same idea as used for asymmetric sparsification in Section~3.4 in Eppstein
  et al.~\cite{eppstein_etal97}.
\end{remark}

We call the leaf nodes at the same level as that of the leftmost leaf node to be
at level~$0$.  We increment these level numbers naturally as we move upwards in
the tree.  Since we maintain a complete binary tree, the root, which is at the
highest level, is on level~$\clog{n}$.  After a point insertion, deletion, or
weight update, we recompute all the nodes that are affected by running $\alg_S$
from scratch.  Once we update a leaf node, all the nodes on its leaf-to-root
path are affected.
Since at most two leaf nodes are updated after every point update, we run at
most $2\clog{n}$ instances of $\alg_S$.  Finally, to reduce the cardinality of
our output coreset, we run another \emph{outer} instance of $\alg_S$ with
$\eps_s = \eps/3$ and $\errpr_s = \errpr/2$ with input as the output of the
root.  Here, $\eps_s$ and $\errpr_s$ are parameters for $\alg_S$ as described
below, and our goal is to compute an $\eps$-coreset with probability at least
$1{-} \errpr$.  The outer instance is run after every update.

The static coreset algorithm $\alg_S$ takes as input an integer weighted set of
$n_s$ points with total weight $W_s$ and always returns a weighted set of
cardinality at most $s(\eps_s, \errpr_s, W_s)$; this set is an $\eps_s$-coreset
with probability at least $1{-} \errpr_s$.  Let the running time of $\alg_S$ be
$t(n_s, \eps_s, \errpr_s, W_s)$.  We assume that the functions $t$ and $s$ are
nondecreasing in $W_s$ and nonincreasing in $\eps_s$ and $\errpr_s$, and also
that $t$ is nondecreasing in $n_s$. We call such functions $t$ and $s$
\emph{well-behaved.}

We note that $t$ and $s$ implicitly depend on the query space $Q$ as well.  In
particular, for $k$-median and $k$-means, they depend on $k$ and the dimension
or the cardinality of the universe from which a solution is allowed to be
picked.  Also, assume that the total weight of $\alg_S$'s output is at most
$1 {+} \delta$ times the total input weight and it outputs a coreset of points
with integer weights.  For the dynamic algorithm, $n$ denotes the current number
of points, and we assume that any input weight is a rational number with
numerator and denominator bounded by $n^c$, for a fixed constant $c$.
\begin{theorem}
  \label{thm:main}
  Assume that there is a static algorithm $\alg_S$ that takes as input an
  integer-weighted set of $n_s$ points with total weight $W_s$ and always
  returns an integer-weighted set of cardinality at most
  $s(\eps_s, \errpr_s, W_s)$ with total weight at most $(1 {+} \delta)W_s$, and
  this set is an $\eps_s$-coreset with probability at least $1{-} \errpr_s$.
  Let the running time of $\alg_S$ be $t(n_s, \eps_s, \errpr_s, W_s)$, and
  assume that both $s$ and $t$ are well-behaved.  Then there is a fully-dynamic
  algorithm that, on \emph{rational}-weighted input points, always maintains an
  $s\left(\frac{\eps}{3}, \frac{\errpr}{2}, W_p\right)$-cardinality weighted set.
  This set is an $\eps$-coreset with probability at least $1{-} \errpr$.  Its
  worst-case update time is
  \[
    O\left(t\left(2s^*, \frac{\eps}{6\clog{n_p}}, \frac{\errpr}{2 n_p}, W_p\right)
      \cdot \left(1 + \log (1 {+} \delta) + \frac{\log \eps^{-1}}{\log n}\right)
      \cdot \log n\right)\,,
  \]
  where $8n/3 \le n_p \le 8n$,
  $W_p = (1 {+} \delta)^{\clog{n_p}}n_p^{c''}\ceile{1/\eps}$, $c''$ is a
  constant, and
  $s^* = s\left(\frac{\eps}{6\clog{2 n_p}}, \frac{\errpr}{4 n_p}, W_p\right)$.
\end{theorem}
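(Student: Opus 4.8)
The plan is to build the fully-dynamic algorithm exactly as sketched in the ``Our technique'' paragraph and then bound three things separately: the cardinality of the maintained output, its coreset quality (with the claimed success probability), and the worst-case update time. The main work is to set up the parameters for the internal $\alg_S$ instances so that, despite $n$ changing over time, the guarantees always hold, and to show the refresh-pointer mechanism makes everything worst-case.

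\textbf{Step 1: data structure and updates.} I would formalize the complete-binary-tree data structure with alg-nodes collapsed into parents, and describe Insert/Delete/Weight-update as in Section~\ref{sec:heap_description} (for deletion, swap $\ell_d$ with the rightmost leaf $\ell_r$ and delete $\ell_r$). After every update at most two leaf-to-root paths change, each of length at most $\clog{n}$, so at most $2\clog{n}$ internal $\alg_S$ instances are recomputed per update, plus the outer instance. Crucially, I also run the two \emph{refresh pointers}: after each update, rerun all $\alg_S$ instances on the two leaf-to-root paths from the leaves currently pointed to, then advance the pointers rightward (cycling). Since there are at most $n$ leaves and after any window of $n$ updates the tree has changed by at most... actually, I want the refresh pointers to sweep all leaves within $O(n)$ updates, so that no internal node keeps stale $\eps_s,\errpr_s$ for longer than $O(n)$ updates. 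I would choose the ``target'' parameters for the internal instances as functions of a \emph{range} of $n$-values rather than the exact current $n$; concretely, when $n_p$ ranges over $[8n/3, 8n]$ one picks $\eps_s = \eps/(6\clog{n_p})$ and $\errpr_s = \errpr/(2n_p)$ using a sufficiently generous $n_p$, and these parameters stay valid as long as $n$ does not move outside the window, which the refresh sweep guarantees.

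\textbf{Step 2: correctness (cardinality and quality).} The output is the result of the outer $\alg_S$ instance run with $\eps_s=\eps/3$, $\errpr_s=\errpr/2$ on the root's set, so by the static guarantee its cardinality is at most $s(\eps/3,\errpr/2,W_p)$ once I've bounded the total weight $W_p$ feeding it. For quality: the root's set is at most a $\clog{n}$-level $(\eps_s)$-coreset of $X_w$ — each internal level composes one application of $\alg_S$ with quality $\eps/(6\clog{n_p})$ — so by Corollary~\ref{cor:ilevelcoreset} (with $\ell=\clog{n_p}$ and target $\eps/3$, say) the root is an $(\eps/3)$-coreset; then the outer instance composes one more $(\eps/3)$-coreset, and $\eps/3 + \eps/3 + (\eps/3)^2 \le \eps$, so the output is an $\eps$-coreset. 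This uses Lemma~\ref{lem:coresetunion} (sibling coresets union with no loss) at each internal node before the reduce step, and Lemma~\ref{lem:coresetcomposition}/\ref{lem:ilevelcoreset} for the level composition. For the failure probability, union-bound over the $\le 2\clog{n_p}$ internal instances recomputed per update plus the refreshed ones plus the outer one; each fails with probability $\le \errpr/(2n_p)$ (internal) or $\errpr/2$ (outer), and over a polynomial number of updates this totals at most $\errpr$ — I'd state the union bound is over all instances whose output is ``live'' at query time, of which there are $O(n_p)$, each with error $\le \errpr/(2n_p)$.

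\textbf{Step 3: weights, rounding, and update time — the hard part.} The real obstacle is the weight blow-up: feeding $\alg_S$'s fractional output directly into its parent lets denominators (or magnitudes, after the $(1{+}\delta)$ slack) grow multiplicatively up the $\clog{n}$ levels, giving numbers of size $(1{+}\delta)^{\clog n}\poly(n)$ or worse, which would ruin the update-time bound. I would handle this with the rounding scheme backed by Lemmas~\ref{lem:round} and \ref{lem:roundr}: at each leaf round the input weight to a common denominator $\Theta(\poly(n)/\eps)$, and after each internal $\alg_S$ call round every output weight the same way (writing $r = \lfloor r\rfloor + c/d$), so that the denominator stays fixed at every level while the integer parts grow only by the $(1{+}\delta)$ factor per level, bounding the total weight by $W_p = (1{+}\delta)^{\clog{n_p}}n_p^{c''}\ceile{1/\eps}$. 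By Lemma~\ref{lem:roundr} each rounding multiplies every point's cost by $(1\pm 1/d)$, and $d = \Theta(\poly(n)/\eps)$ makes the accumulated error over $\clog n$ levels at most, say, $\eps/(6\clog n)$ per level's worth — I need to fold this rounding error into the per-level quality budget so the Corollary~\ref{cor:ilevelcoreset} bookkeeping still closes; this is why the internal parameter is $\eps/(6\clog{n_p})$ rather than $\eps/(2\clog{n_p})$. Finally, each of the $O(\log n)$ recomputed internal instances runs on an input of size at most $2s^*$ (two rounded sibling coresets) with parameters $(\eps/(6\clog{n_p}),\errpr/(2n_p),W_p)$, so by monotonicity of the well-behaved $t$ each costs $t(2s^*,\eps/(6\clog{n_p}),\errpr/(2n_p),W_p)$; multiplying by the $O(\log n)$ recomputations (update paths plus refresh paths) and the outer instance, and accounting for the $(1+\log(1{+}\delta)+\log\eps^{-1}/\log n)$ factor coming from the word-size needed to manipulate the $W_p$-sized weights, yields the stated worst-case update time. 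I expect the trickiest bookkeeping to be choosing $d$, the window constants ($8/3$ and $8$), and the denominators $6$ consistently so that rounding error $+$ $\clog n$ levels of coreset composition $+$ one outer composition all fit inside $\eps$, while the refresh sweep provably keeps every live internal instance's parameters valid.
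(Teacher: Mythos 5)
Your proposal follows essentially the same route as the paper's proof: the collapsed complete binary tree with a coreset threshold, at most $O(\log n)$ recomputed instances per update plus an outer instance, two refresh pointers keeping $\eps_s=\eps/(6\clog{n_p})$ and $\errpr_s=\errpr/(2n_p)$ valid with $n_p\in[8n/3,8n]$, the level-composition argument via Lemmas~\ref{lem:coresetunion}--\ref{lem:binomgeom} and Corollary~\ref{cor:ilevelcoreset}, a union bound over the live instances, and rounding of weights to a common denominator to bound the total weight by $W_p$ and hence the word size. Two minor slips, neither fatal: the factor $6$ arises from reserving $\eps/3$ for the outer instance (i.e.\ applying Corollary~\ref{cor:ilevelcoreset} with target $\eps/3$), not from per-level rounding error, and under this theorem's hypothesis that $\alg_S$ returns \emph{integer} weights no rounding of internal outputs is needed (only leaf inputs are rounded, with cross-phase denominators reconciled via a common multiple); also, the paper makes your ``refresh sweep guarantees validity'' claim precise through explicit phases of length $n_0/2$ with $n_p=4n_0$, using that new leaves are appended only to the right so all surviving old leaves are refreshed before the phase ends.
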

\begin{proof}
  We first prove that the output of the algorithm is an $\eps$-coreset if every
  non-outer $\alg_S$ instance outputs an $\eps_s$-coreset of its input for some
  $\eps_s \le \eps/(6\clog{n})$ and the outer $\alg_S$ instance outputs an
  $(\eps/3)$-coreset of its input.  We prove the following by induction on level
  number: every node at level~$\ell$ contains a
  $(\sum_{i = 1}^\ell\binom{\ell}{i}\eps_s^i)$-coreset of the leaf nodes in its
  subtree.  In the base case, a node at level~$1$ contains an $\eps_s$-coreset
  of its input trivially.  An $\alg_S$ instance $A$ at level~$i$ gets as input
  two sets, say $C'_w$ and $C''_w$, each of which is a
  $(\sum_{i = 1}^{\ell - 1}\binom{\ell - 1}{i}\eps_s^i)$-coreset for the leaf
  nodes in their respective nodes' subtrees.  Hence, $C'_w \cup C''_w$ is a
  $(\sum_{i = 1}^{\ell - 1}\binom{\ell - 1}{i}\eps_s^i)$-coreset for leaf nodes
  in the subtree rooted at $A$ by Lemma~\ref{lem:coresetunion}.  Now, $A$
  outputs an $\eps_s$-coreset of $C'_w \cup C''_w$, hence by
  Lemma~\ref{lem:coresetcomposition}, its output is an
  $(\eps_s + (1{+} \eps_s)\sum_{i = 1}^{\ell - 1}\binom{\ell -
    1}{i}\eps_s^i)$-coreset of the leaf nodes in its subtree, which, by
  Lemma~\ref{lem:binom}, means a
  $(\sum_{i = 1}^\ell\binom{\ell}{i}\eps_s^i)$-coreset.  This completes the
  induction step.  Hence, the root node, which is at level~$\clog{n}$, contains
  $(\sum_{i = 1}^{\clog{n}}\binom{\clog{n}}{i}\eps_s^i)$-coreset.  Now, since
  $\eps_s \le \eps/(6\clog{n})$, by Lemma~\ref{lem:binomgeom}, the output at the
  root is an $(\eps/3)$-coreset.  The outer $\alg_S$ instance outputs an
  $(\eps/3)$-coreset of this, hence, by Lemma~\ref{lem:coresetcomposition}, the
  final output is an $(2\eps/3 + \eps^2/9)$-coreset, which is an $\eps$-coreset
  of all points.

  Recall that the running time of $\alg_S$ is $t(n_s, \eps_s, \errpr_s, W_s)$ to
  compute an $\eps_s$-coreset with probability at least $1{-} \errpr_s$, where
  $n_s$ is the number of points in the input.  Our output success probability
  will depend on $\errpr_s$, and $\eps$ depends on $\eps_s$ as proved in the
  previous paragraph.  We will need $\eps_s \le \eps/(6\clog{n})$ and
  $\errpr_s \le \errpr/(2n)$, so these depend on $n$, which can change a lot
  over time.  We now show how to maintain these guarantees for $\eps_s$ and
  $\errpr_s$ after each update.
  
  Towards this, we need a little tweak to our algorithm and an additional
  maintenance routine that we call the \emph{refresher}.  The algorithm works in
  phases.  The refresher routine maintains two \emph{refresh} pointers that
  always point to consecutive leaf nodes, say $r_1$ and $r_2$.  The refresh
  pointers are reset after the end of a phase as follows.  If the number of leaf
  nodes is a power of $2$, then $r_1$ and $r_2$ point to the two leftmost leaf
  nodes, otherwise they point to the two leftmost leaf nodes at the level above
  the lowest level.  Assume, for completeness, that the very first phase ends
  after receiving two points, so the tree is just two leaf nodes and their
  parent as the root.

  For each subsequent phase, let $n_0$ be the value of $n$ at the beginning of
  the phase.  Each phase ends after $n_0/2$ updates, and we set $n_p = 4 n_0$.
  This guarantees that $n_p$ is greater than $n$ throughout the whole phase and
  even the next phase (details appear below).  After receiving an update, we
  rerun all the $\alg_S$ instances on the leaf-to-root path starting at $r_1$
  and $r_2$ (at most $2\clog{n}$ such instances).  This is the refresher
  routine.  Then we move the refresh pointers to the next two leaf nodes on the
  right.  If we reach the right end, then we go to the next level if it exists,
  otherwise we stop.  If we stop, then we achieved the goal of (re-)running all
  the $\alg_S$ instances that are present at the end of the phase at least once
  in this phase (this will become clearer below).  After the refresher routine,
  we execute the update which affects at most two leaf nodes. We rerun all the
  $\alg_S$ instances that are affected by this update, again, at most
  $2\clog{n}$ such instances.  So in total, at most $4\clog{n}$ of non-outer
  $\alg_S$ instances are run after an update and one outer instance, which
  explains the $\log n$ factor in the update time.  We now explain the
  parameters used in the $\alg_S$ instances.  For all the non-outer $\alg_S$
  instances, we use $\eps_s = \eps/(6\clog{n_p})$ and
  $\errpr_s = \errpr/(2 n_p)$.  (This explains the $\eps_s$ and $\errpr_s$
  parameters of the functions $t$ and $s$ in the theorem statement.)  Note here
  that the running time of the outer instance is going to be less than any
  non-outer instance because $t$ is non-increasing in $\eps_s$ and $\errpr_s$.
 
  As we use $n_p = 4 n_0$ and there could be at most $n_0/2$ insertions in a
  phase, the final value of $n$ is at most $3 n_0/2$, and, thus, $n_p$ is always
  greater than $n$.  In fact, crucially, $n_p$ is an upper bound on $n$ for even
  the next phase; in the next phase,
  $n \le n_0 + n_0/2 + (n_0 + n_0/2)/2 = 9 n_0/4 \le n_p$.  Also, in the current
  phase, $n_0/2 \le n \le 3n_0/2$, hence $8n/3 \le n_p \le 8n$, as required
  (cf.~the theorem statement).

  We now prove that any non-outer $\alg_S$ instance uses
  $\eps_s\le \eps/(6\clog{n})$ and $\errpr_s \le \errpr/(2n)$ at any time
  instant.  Let $L$ be the set of leaf nodes at the beginning of the phase;
  therefore, $|L| = n_0$.  An $\alg_S$ instance that exists at the end of the
  phase is either on the leaf-to-root path for some leaf in $L$ or it was
  created/updated in this phase.  At the end of the phase, the refresh pointers
  will hit all surviving leaf nodes in $L$; the argument is as follows.  Each
  phase lasts for $n_0/2$ updates, $|L| = n_0$, and we move the two refresh
  pointers to the right on next two leaf nodes after each update.  Importantly,
  new leaf nodes are added only to the right of the rightmost leaf node at the
  lowest level, and hence, the refresher routine will have hit all surviving
  leaf nodes in $L$ before hitting a newly created leaf node.

  This shows that, in any case (being either hit by a point update or by the
  refresher routine), each $\alg_S$ instance is run with $n_p = 4 n_0$, setting
  up these instances for the next phase.  This means that at any time instant,
  each $\alg_S$ instance was created/updated in the current phase or
  created/updated in the previous phase, thus showing that
  $\eps_s\le \eps/(6\clog{n})$ and $\errpr_s \le \errpr/(2n)$ for all $\alg_S$
  instances at all times.

  At any time instant, there are at most $n$ non-outer instances of $\alg_S$,
  each with success probability at least $1 - \errpr/(2n)$, and the outer
  $\alg_S$ instance has success probability at least $1 -\errpr/2$.  Hence, the
  final success probability is at least $1{-} \errpr$ by the union bound over
  these $n + 1$ instances.

  \paragraph{How to handle weights} We will need one further tweak to argue that
  each weight ever encountered by the algorithm can be stored using
  $O(1 + \log (1 {+} \delta) + \log(1/\eps)/\log n)$ words, which also explains
  that factor in the update time.  By assumption, an insertion or weight update
  comes with a weight that is a fraction with the numerator and the denominator
  bounded by $n^c$ for some fixed constant $c$.  After receiving such an update,
  we approximate the weight by a fraction that has numerator bounded by
  $n_p^{c'}\ceile{1/\eps}$, where $c' = 2c + 1$ is also a fixed constant, and
  the denominator is equal to $n_p^{c + 1}\ceile{1/\eps}$\footnote{The static
    algorithm $\alg_S$ expects integer-weighted input and outputs
    integer-weighted points, whereas our dynamic algorithm handles fractional
    weights.  If fractional weights are na\"{i}vely stored in our dynamic
    algorithm, then at internal nodes, combining two fractions may result in
    larger magnitude numbers.  E.g., na\"{i}vely handling two points with
    weights $a/b$ and $c/d$ so as to be used in $\alg_S$ results in weights
    $ad/(bd)$ and $bc/(bd)$.  Thus, at level~$i$, the numerators and
    denominators may be as large $(\poly(n))^{2^i}$.  Note that some rounding
    would be needed even if $\alg_S$ can handle rational weights, because its
    output may be points with rational weights having much larger magnitude;
    e.g., even if the output magnitude is about only quadratic in that of the
    input, the blowup near the root in our dynamic algorithm would be $n$th
    power of the input.  In fact, we do this rounding in the proof of
    Theorem~\ref{thm:kmedkmeans}.}.  The change in the cost due to this
  approximation is at most $\eps /n_p$ times the original cost; hence, by the
  linearity of the cost function, the output coreset quality is affected by at
  most an additive factor of $O(\eps/n)$.  More formally, the following claim
  holds by Lemma~\ref{lem:round} and using $b/d \le \eps/n_p$ below (think of
  $D$ below as cost).
  \begin{claim}
    \label{clm:12}
    Let $d = n_p^{c + 1}\ceile{1/\eps}$.  Given a rational number $a/b$, where
    $a$ and $b$ are integers, $a \le n_p^c$ and $b \le n_p^c,$ let
    $f = \lceil ad/b \rceil$. Then $f \le n_p^{2c + 1}\ceile{1/\eps}$ and
    $(f/d)D \in [1 {\pm} \eps/n_p] (a/b)D$ for any nonnegative real $D$.
  \end{claim}
  Recall that due to the refresher routine, at any time instant, the denominator
  of the weight at any leaf node can be one of the two:
  $n_p^{c + 1}\ceile{1/\eps}$ or $n_{pp}^{c + 1}\ceile{1/\eps}$, where $n_{pp}$
  is the value of $n_p$ for the previous phase. When the two children of an
  internal node use different denominators, this complicates our rounding
  scheme.  Thus, when taking a union of the children's sets at an internal node,
  for each weight, we make its numerator an integer and the denominator equal to
  $(n_p n_{pp})^{c + 1}\ceile{1/\eps}$, which is a common multiple of
  $n_p^{c + 1}\ceile{1/\eps}$ and $n_{pp}^{c + 1}\ceile{1/\eps}$---the only
  possible denominators of an input weight after rounding.  Next, we run the
  $\alg_S$ instance with integer weights as given by the numerator, then
  (implicitly) dividing the output weights by the denominator
  $(n_p n_{pp})^{c + 1}\ceile{1/\eps}$ afterwards.  Since each $\alg_S$ instance
  can increase the total weight by at most a factor of $1 {+} \delta$, the sum
  of the numerators of all weights at level~$i$ is always bounded by
  $n(1 {+} \delta)^i (n_p n_{pp})^{c'}\ceile{1/\eps}$.  Since $i \le \clog{n}$
  and $n_{pp} = \Theta(n_p)$, there exists a constant $c''$, such that the sum
  of the numerators of all weights at any level~$i$ and all the possible
  numerators and denominators are bounded by
  $(1 {+} \delta)^{\clog{n_p}}n_p^{c''}\ceile{1/\eps} =: W_p$, and hence, can be
  stored in $O(1 + \log (1 {+} \delta) + \log(1/\eps)/\log n)$ words as desired
  (see the beginning of the paragraph before Claim~\ref{clm:12}).  This also
  justifies the $W_s$ parameters of the functions $t$ and $s$ in the theorem
  statement.

  Now we put everything together.  The outer $\alg_S$ instance outputs a
  weighted set of size at most
  $s\left(\frac{\eps}{3}, \frac{\errpr}{2}, W_p\right)$.  This set is an
  $\eps$-coreset with probability at least $1{-} \errpr$, which we proved by a
  union bound over all $\alg_S$ instances.  We set
  $s' = s\left(\frac{\eps}{6\clog{n_p}}, \frac{\errpr}{2 n_p}, W_p\right)$,
  which is the threshold for computing a coreset at each internal node, i.e.,
  (recall that) if the number of points at an internal node is greater than
  $s'$, then we run $\alg_S$ to compute a coreset.  An upper bound on the
  threshold for the current phase and the previous phase is
  $s^* = s\left(\frac{\eps}{6\clog{2 n_p}}, \frac{\errpr}{4 n_p}, W_p\right)$
  because the $n_p$ value for the previous phase can be at most twice that of
  the current phase.  Then the worst-case update time is dominated by the
  non-outer $\alg_S$ instances, each running in time
  $t\left(2 s^*, \frac{\eps}{2\clog{n_p}}, \frac{\errpr}{2 n_p}, W_p\right)$,
  and we run $O(\log n)$ of these after receiving an update.  An additional
  factor of $1 + \log(1 {+} \delta) + \log(1/\eps)/\log n$ appears because each
  weight may need memory worth $O(1 + \log(1 {+} \delta) + \log(1/\eps)/\log n)$
  words, and we need constant time to access each memory word.
\end{proof}

Before proving the concrete bounds for $k$-median and $k$-means that are stated
in Theorem~\ref{thm:kmedkmeans}, we prove a weaker theorem that is a direct
consequence of Theorem~\ref{thm:main} using the static algorithm of
Chen~\cite{chen_09}.
\begin{theorem}
  \label{thm:kmedkmeansweak}
  For the $k$-median and $k$-means problems, there is a fully-dynamic algorithm
  that maintains a set of cardinality
  $O(\eps^{-2}k\log^2 (n/\eps)(k\log n + \log(1/\errpr)))$, that is an
  $\eps$-coreset with probability at least $1{-} \errpr$, and has worst-case
  update time
  \[
    O\left(\eps^{-2}k^2\log^3 n \log^2
      \frac{n}{\eps}\log\frac{n}{\errpr}\left(k\log n +
        \log\frac{n}{\errpr}\right)\log\log\frac{n}{\eps}\left(1+\frac{\log
          \eps^{-1}}{\log n}\right)\right)\,.
  \]

  Ignoring the $\log \log n$ factors, for $\errpr = \Omega(1/\poly(n))$ and
  $\eps = \Omega(1/\poly(n))$, the coreset cardinality is
  $O(\eps^{-2}k^2\log^3 n)$, and the worst-case update time is
  $O(\eps^{-2}k^3\log^7 n)$.
\end{theorem}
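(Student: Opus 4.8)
The plan is to apply Theorem~\ref{thm:main} with the static coreset construction of Chen~\cite{chen_09} playing the role of $\alg_S$. First I would record the guarantees Chen's algorithm provides for $k$-median and $k$-means in a general metric: on an integer-weighted input of $n_s$ points of total weight $W_s$, it outputs, with probability at least $1-\errpr_s$, an integer-weighted set that is an $\eps_s$-coreset, of cardinality $s(\eps_s,\errpr_s,W_s)$ which is $\tilde O(\eps_s^{-2}k(k\log W_s + \log(1/\errpr_s)))$ (the $\tilde O$ absorbing $\polylog W_s$ and $\log\log$ factors), in running time $t(n_s,\eps_s,\errpr_s,W_s)$ that is $\tilde O(n_s k)$ plus the coreset-construction overhead, and it preserves the total weight exactly, so we may take $\delta = 0$. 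Since integer weights are at least $1$ we have $n_s \le W_s$, so a bound phrased in $\log W_s$ also controls $\log n_s$. The one routine hypothesis to check before invoking Theorem~\ref{thm:main} is that $s$ and $t$ are well-behaved: both are nonincreasing in $\eps_s$ and in $\errpr_s$ and nondecreasing in $W_s$ (the $W_s$-dependence is logarithmic, hence monotone), and $t$ is nondecreasing in $n_s$.

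Next I would substitute the parameters produced by Theorem~\ref{thm:main}. Because $\delta = 0$, the quantity $W_p = (1+\delta)^{\clog{n_p}}n_p^{c''}\ceile{1/\eps}$ collapses to $n_p^{c''}\ceile{1/\eps}$, which is polynomial in $n$ and $1/\eps$, so $\log W_p = O(\log(n/\eps))$. Since $8n/3 \le n_p \le 8n$, we have $\clog{n_p} = \Theta(\log n)$ and $\log(1/\errpr_s) = \log(2n_p/\errpr) = O(\log(n/\errpr))$. Plugging these into the output cardinality $s(\eps/3,\errpr/2,W_p)$ of Theorem~\ref{thm:main}, together with the explicit form of Chen's size bound, yields the claimed coreset cardinality $O(\eps^{-2}k\log^2(n/\eps)(k\log n + \log(1/\errpr)))$.

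For the update time I would plug Chen's $t$ into the bound
\[
O\left(t\left(2s^*,\tfrac{\eps}{6\clog{n_p}},\tfrac{\errpr}{2n_p},W_p\right)\cdot\left(1+\log(1{+}\delta)+\tfrac{\log\eps^{-1}}{\log n}\right)\cdot\log n\right)
\]
of Theorem~\ref{thm:main}. Since $\delta = 0$, the middle factor is $1+\log\eps^{-1}/\log n$, which is exactly the last factor of the claimed bound, and the trailing $\log n$ is the tree height. With $\eps_s = \Theta(\eps/\log n)$ we get $\eps_s^{-2} = \Theta(\eps^{-2}\log^2 n)$, and with $\errpr_s = \Theta(\errpr/n)$ we get $\log(1/\errpr_s) = O(\log(n/\errpr))$, so the threshold $s^*$ is $O(\eps^{-2}\log^2 n\cdot k\log^2(n/\eps)(k\log n + \log(n/\errpr)))$; feeding $n_s = 2s^*$ and these parameters into Chen's running time --- dominated by $\tilde O(n_s k)$, with the remaining $\log$ factors ($\eps_s^{-1}$, $\log W_p$, $\log(1/\errpr_s)$, $\log n_s$) coming from Chen's ring partition and sampling overhead --- and multiplying by $\clog{n}$ gives the stated worst-case update time. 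I would finish by specializing to $\eps,\errpr = \Omega(1/\poly(n))$, under which $\log(n/\eps),\log(n/\errpr) = \Theta(\log n)$ and the $\log\log$ terms are $O(\log\log n)$, to read off cardinality $O(\eps^{-2}k^2\log^3 n)$ and update time $O(\eps^{-2}k^3\log^7 n)$ after dropping $\log\log n$ factors.

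The content of the proof is bookkeeping rather than a new idea: the only thing that needs care is tracking how the scaled parameters $\eps_s = \eps/(6\clog{n_p})$ and $\errpr_s = \errpr/(2n_p)$ of Theorem~\ref{thm:main} flow through both of Chen's bounds, and checking that the weight blowup is harmless. The fact making this clean is that Chen's construction preserves total weight, so $\delta = 0$: then $W_p$ stays polynomial in $n$ and $1/\eps$ rather than exponential, and the $1+\log(1+\delta)$ overhead in the update time vanishes. (Had $\delta$ been a positive constant, $W_p = (1+\delta)^{\clog{n_p}}\poly(n)$ would still be $\poly(n)$, since $(1+\delta)^{\log n} = n^{O(1)}$, but $1+\log(1+\delta)$ would then contribute a genuine constant slowdown.)
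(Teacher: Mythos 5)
Your proposal is correct and follows essentially the same route as the paper: take Chen's construction as $\alg_S$ with $\delta=0$ (integer weights in, integer weights out, total weight preserved), note $W_p=O(\poly(n)/\eps)$, verify well-behavedness, and plug Chen's $s$ and $t$ into Theorem~\ref{thm:main} to read off $s^*$, the cardinality, and the update time. The only nit is that you record Chen's size bound with $k\log W_s$ where the paper uses $k\log n_s$ (times $\log^2 W_s$); since $n_s\le W_s$ and $\log W_p=O(\log(n/\eps))$ this only affects lower-order precision and the final stated bounds are unchanged.
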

\begin{proof}
  Chen's algorithm takes in an integer weighted set and outputs also an integer
  weighted set.  Its output has the same total weight as the input, so
  $\delta = 0$ (see Theorem~\ref{thm:main}).  Also, for Chen's algorithm,
  $s(\eps_s, \errpr_s, W_s) = O(\eps_s^{-2}k(k\log n + \log(1/\errpr_s))\log^2
  W_s)$ and the running time
  $t(n_s, \eps_s, \errpr_s, W_s) = O(n_s k\log(1/\errpr_s) \log\log W_s)$ (see
  Theorems~3.6~and~5.5 in Chen~\cite{chen_09}), which is dominated by the
  computation of a bicriteria approximation. Note that both $s$ and $t$ are
  well-behaved. Using $W_p = O(\poly(n)/\eps)$,
  $s^* = O(\eps^{-2}k\log^2 n\log^2(n/\eps)(k\log n + \log(n/\errpr)))$, and
  $\delta = 0$ in Theorem~\ref{thm:main} gives the desired bounds using the
  functions $t$ and $s$ above.
\end{proof}

Now we use the result of Braverman et al.~\cite{braverman_etal_16} to get better
bounds as stated in Theorem~\ref{thm:kmedkmeans} in the introduction section.
Unfortunately, we cannot use Theorem~\ref{thm:main} as a complete black box for
this because in this case, on integer weighted input, $\alg_S$ does not output
an integer weighted coreset.  The proof of the following theorem is thus an
extension of the proof of Theorem~\ref{thm:main}.
\begingroup
\def\thetheorem{\ref{thm:kmedkmeans}}
\begin{theorem}
  For the $k$-median and $k$-means problems, there is a fully-dynamic algorithm
  that maintains a set of cardinality
  $O(\eps^{-2}k(\log n \log k \log(k\eps^{-1}\log n) + \log (1/\errpr)))$, that
  is an $\eps$-coreset with probability at least $1{-} \errpr$, and has
  worst-case update time
  $O\left(\eps^{-2} k^2\log^5n\log^3 k \log^2(1/\eps) (\log\log n)^3\right)$,
  assuming that $\eps = \Omega(1/\poly(n))$ and $\errpr = \Omega(1/\poly(n))$.
\end{theorem}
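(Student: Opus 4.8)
The plan is to reuse the proof of Theorem~\ref{thm:main} almost verbatim; the one genuinely new ingredient is that the sensitivity-sampling coreset of Braverman et al.~\cite{braverman_etal_16} returns a \emph{reweighted} subset whose weights are fractional even on integer-weighted input, so we must now round the output of \emph{every} $\alg_S$ instance --- not only the weights entering at the leaves --- before handing it to the parent. First I would record the properties of $\alg_S=$``Braverman et al.'' that the argument uses: on a rationally weighted multiset of $n_s$ points it \emph{always} returns a reweighted subset of cardinality at most $s(\eps_s,\errpr_s,W_s)$ and total weight within a $(1{+}\delta)$ factor of the input --- with $\delta$ a constant, the expected total weight in fact being preserved --- which is an $\eps_s$-coreset with probability at least $1{-}\errpr_s$, and it runs in time $t(n_s,\eps_s,\errpr_s,W_s)$; both $s$ and $t$ are well-behaved. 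Concretely $s(\eps_s,\errpr_s,W_s)=O\!\left(\eps_s^{-2}k\log k\,\log W_s\,\log(k\eps_s^{-1}\log W_s)+\eps_s^{-2}\log(1/\errpr_s)\right)$ and $t(n_s,\eps_s,\errpr_s,W_s)=\tilde O(n_s k)$ plus the time to compute an $O(1)$-approximate bicriteria solution. Since $\alg_S$ already handles weighted input, the fractional output is the \emph{only} thing that prevents a black-box invocation of Theorem~\ref{thm:main}.

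For correctness I would rerun the level-by-level induction of Theorem~\ref{thm:main}, but with each internal level now being the composition of three operations: the lossless weighted union of the two children (Lemma~\ref{lem:coresetunion}); the $\eps_s$-coreset produced by $\alg_S$; and a rounding of every output weight to the fixed denominator $d:=(n_p n_{pp})^{c+1}\ceile{1/\eps}$ (the least common multiple of the two phase-denominators, exactly as in Theorem~\ref{thm:main}). By Lemma~\ref{lem:round} and Lemma~\ref{lem:roundr} --- the $\alg_S$ output weights being rationals with $\poly(n)$-bounded numerator and denominator, like the input distances --- this last step alters the cost of the node's set, for \emph{every} query $S\in Q$, by a factor in $[1{\pm}1/d]$, so it is a $(1/d)$-coreset step. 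Hence along any root-to-leaf path there are at most $2\clog{n}$ internal coreset-composition steps (one $\eps_s$ and one $1/d$ per level), plus the leaf-input rounding of Claim~\ref{clm:12}; taking $\eps_s=\Theta(\eps/\log n)$ and noting that $1/d\le\eps/\poly(n)\ll\eps/\log n$, Lemmas~\ref{lem:binom}~and~\ref{lem:binomgeom} (equivalently Corollary~\ref{cor:ilevelcoreset}) bound the accumulated quality at the root by $\eps/3$, and the outer $\alg_S$ instance with $\eps_s=\eps/3$ then turns it into an $\eps$-coreset by Lemma~\ref{lem:coresetcomposition}. Rounding is deterministic, so the success-probability bound is unchanged: a union bound over the at most $n$ non-outer instances at error $\errpr/(2n_p)$ and the single outer instance at error $\errpr/2$ gives overall failure probability at most $\errpr$.

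The bit-length analysis is then the same as in Theorem~\ref{thm:main}: because we now round at every node to a denominator dividing $d$, denominators never grow, and since each $\alg_S$ instance inflates the total weight by at most $(1{+}\delta)$, the numerator of the total weight at level $i$ is at most $(1{+}\delta)^i$ times its value at the leaves; hence all numerators and denominators are bounded by $W_p=(1{+}\delta)^{\clog{n_p}}\poly(n_p)\ceile{1/\eps}=\poly(n)/\eps$ and fit in $O(1+\log(1{+}\delta)+\log(1/\eps)/\log n)=O(1)$ words under the hypotheses $\eps=\Omega(1/\poly(n))$ and $\errpr=\Omega(1/\poly(n))$. The worst-case update time is therefore $O(\log n)\cdot t\!\left(2s^*,\Theta(\eps/\log n),\Theta(\errpr/n),W_p\right)$ with $s^*=s\!\left(\Theta(\eps/\log n),\Theta(\errpr/n),W_p\right)$, exactly as in Theorem~\ref{thm:main}. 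Substituting the explicit $s$ and $t$ above and using $\log W_p=O(\log n)$: the outer instance (with $\eps_s=\eps/3$, $\errpr_s=\errpr/2$) has output size $O(\eps^{-2}k(\log n\log k\log(k\eps^{-1}\log n)+\log(1/\errpr)))$, giving the claimed cardinality, while $s^*=O(\eps^{-2}k\log^3 n\log k\log(k\eps^{-1}\log n)+\eps^{-2}\log^2 n\log(n/\errpr))$ and $t(2s^*,\dots)=\tilde O(s^* k)$; collecting the logarithmic factors and multiplying by the $O(\log n)$ tree factor yields the stated worst-case update time.

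I expect the main obstacle to be reconciling the two competing demands placed on the rounding: the per-level multiplicative $(1/d)$ error must be small enough that its composition over $\Theta(\log n)$ levels stays inside the coreset-quality budget, yet $d$ must stay $\poly(n)$ so that the numerators --- which are multiplied by $(1{+}\delta)$ at each of the $\clog{n}$ levels and must simultaneously be compatible with the two coexisting phase-denominators $n_p^{c+1}\ceile{1/\eps}$ and $n_{pp}^{c+1}\ceile{1/\eps}$ --- remain polynomially bounded. Threading this through the $n_p$-versus-$n_{pp}$ phase bookkeeping inherited from Theorem~\ref{thm:main}, and then matching the precise logarithmic factors of \cite{braverman_etal_16}, is where the care is needed; there is no new idea beyond the output-rounding itself.
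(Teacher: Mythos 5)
Your high-level plan (rerun the proof of Theorem~\ref{thm:main}, but round the output of every $\alg_S$ instance because Braverman et al.\ return fractional weights) matches the paper's, but the specific rounding you propose does not work, and the missing device is exactly the one the paper introduces. You round every output weight to the \emph{fixed} denominator $d=(n_p n_{pp})^{c+1}\ceile{1/\eps}$ and claim, via Lemmas~\ref{lem:round} and~\ref{lem:roundr}, that this is a per-weight relative $[1\pm 1/d]$ perturbation, hence a negligible $(1/d)$-coreset step per level. Neither lemma applies here: Lemma~\ref{lem:round} gives relative error $b/d$ where $b$ is the \emph{original} denominator, and the output weights of the sensitivity-sampling algorithm are not fractions with $\poly(n)$-bounded denominator comparable to the leaf weights; Lemma~\ref{lem:roundr} gives error $1/d$ only when the weight being rounded is at least $1$, and inside the tree the true weights are tiny rationals (multiples of roughly $1/d$ at the children), so an output weight of $\alg_S$ can be as small as $w_{\min}/s'$, i.e.\ well below the grid spacing $1/d$. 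Rounding such a weight to a multiple of $1/d$ distorts it by an unbounded multiplicative factor (possibly to $0$), and since a single low-weight point at large distance can dominate $c(S,\cdot)$ for some query $S$, the multiplicative coreset guarantee is genuinely destroyed --- additive total-weight error of $s'/d$ cannot be converted into a relative cost error. You in fact name this tension in your last paragraph (fixed, polynomially bounded denominators versus per-level relative error), but your scheme asserts both simultaneously, which is precisely what cannot be had.

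The paper resolves it differently: before running $\alg_S$ at a node, the (integer) input weights are \emph{scaled up by $s'$}, which guarantees every fractional output weight is at least $1$ (this is the role of Line~6 of Braverman et al.'s algorithm), so Lemma~\ref{lem:roundr} legitimately applies with rounding denominator $\ceile{(\log n_p)/\eps}$, giving per-level relative loss $O(\eps/\log n_p)$ --- of the same order as $\eps_s$, hence absorbed by Corollary~\ref{cor:ilevelcoreset}. The price is that the rounding is relative to the node's current scale, so denominators necessarily \emph{grow} by a factor $s'_p\ceile{(\log n_p)/\eps}$ per level (Lemma~\ref{lem:denom}); consequently $W_p$ contains a $(k\ceile{(\log n_p)/\eps})^{O(\log n)}$ factor, every number needs $m=O(\log((k\log n)/\eps))$ words rather than $O(1)$, and this $m$ is an explicit factor in the update-time accounting (the footnote in the paper explains why a finer per-level loss such as $\eps/n_p$ is not affordable and a coarser one ruins the quality). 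So your bit-length claims ($W_p=\poly(n)/\eps$, denominators never grow, $O(1)$ words) are inconsistent with any valid version of the rounding, and your derivation of the stated update time rests on them. To repair the proof you need the scale-by-$s'$ trick plus the level-dependent denominator bookkeeping of Lemma~\ref{lem:denom} (and the accompanying induction over phases with $n_p,n_{pp},s'_p,s'_{pp}$), not a fixed common denominator.
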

\addtocounter{theorem}{-1}
\endgroup
\begin{proof}
  Our dynamic algorithm expects to have at its disposal a static algorithm
  $\alg_S$ that takes integer-weighted input and outputs an integer-weighted
  coreset.  Since the algorithm of Braverman et al.\ that we use as $\alg_S$
  outputs on \emph{integer} weighted input a coreset with \emph{fractional}
  weights, we need some modifications.  Hence, before $\alg_S$ is ready to be
  used in the dynamic algorithm, we round its output to turn it into integers.
  \paragraph{Weight-Rounding Modifications for $\alg_S$}
  \mbox{}
  \begin{itemize}
    \item Let the input to $\alg_S$ be $Y_w$ which is a set of $n_s$ points with
    integer weights $w(1), \ldots, w(n_s)$.

    \item We scale these weights first.  We run $\alg_S$ on the same points with
    weights $s' w(1), \ldots, s' w(n_s)$, where $s'$ is the desired cardinality
    of the output coreset (which is the same as the threshold for computing a
    coreset at an internal node in this case).  We set $s'$ later in a such a
    way that it can be computed by our dynamic algorithm.  This step of
    multiplying input weights by $s'$ is done to make sure that each of the
    fractional weights output by $\alg_S$ is at least $1$ (see Line~6 of
    Algorithm~2 in Braverman et al.~\cite{braverman_etal_16}).
    
    \item Let the output $C_w$ of $ \alg_S$ be a weighted set of $s'$ points
    with fractional weights $w_o(1), \ldots, w_o(s')$.  Using the rounding
    strategy of Lemma~\ref{lem:roundr}, round these fractional weights to have
    an integer numerator and the denominator equal to $\ceile{(\log n_p)/\eps}$
    to get weights $\tilde{w}(1), \ldots, \tilde{w}(s')$, where $n_p$ is as
    defined in the proof of Theorem~\ref{thm:main}.  Formally, for
    $i \in \{1, \ldots, s'\}$:
    \[
      \tilde{w}(i) = \floore{w_o(i)} + \frac{\floorelr{\left(w_o(i) -
            \floore{w_o(i)}\right)\ceilelr{\frac{\log n_p}{\eps}}}}
      {\ceilelr{\frac{\log n_p}{\eps}}}\,.
    \]
    Since $w_o(i) \ge 1$, by Lemma~\ref{lem:roundr}, for any real $D \ge 0$, we
    have $\tilde{w}(i) D \in [1 \pm \eps/\log n_p] w_o(i)D$.
    
    \item Hence, by the linearity of the cost function, $C_w$ with weights
    $\tilde{w}(1)/s', \ldots, \tilde{w}(s')/s'$ is an
    $(\eps_s + 2\eps/\log n_p)$-coreset of $Y_w$ with weights
    $w(1), \ldots, \allowbreak w(n_s)$ if $C_w$ with weights
    $w_o(1), \ldots, w_o(s')$ is an $\eps_s$-coreset of $Y_w$ with weights
    $s' w(1), \ldots, \allowbreak s' w(n_s)$.  Note that $\tilde{w}(i)/s'$ can
    be represented as a fraction with an integer numerator and denominator equal
    to $s'\ceile{(\log n_p)/\eps}$.
    
    \item The additive loss of $2\eps/\log n_p$ in the coreset quality due to
    this rounding is tolerable because every non-outer $\alg_S$ instance will be
    run with $\eps_s = O(\eps/\log n_p)$\footnote{If we go for smaller additive
      loss, say $\eps/n_p$, the denominators of resulting numbers due to this
      rounding would become exponential in $n_p$.  And if we go for a larger
      additive loss, it would worsen the coreset quality at non-outer instances
      to $\omega(\eps/\log n_p)$ resulting in the quality of the output coreset
      worse than $\eps$.}.  Hence, the coreset quality at internal nodes will
    always be $O(\eps_s + \eps/\log n_p) = O(\eps/\log n)$, as desired.
    \item This rounding ensures that on integer-weighted input with total weight
    $W$, the output weights of $\alg_S$ are fractions with integer numerator
    bounded by $(1 {+} \delta) W s' \ceile{(\log n_p)/\eps}$ and integer
    denominator equal to $s' \ceile{(\log n_p)/\eps}$.  Here, $1{+} \delta$ is
    the factor by which $\alg_S$ can increase the total weight.
  \end{itemize}
  
  To handle rational weights in the dynamic algorithm, we first proceed as
  described in the paragraph on how to handle weights in the proof of
  Theorem~\ref{thm:main}.  Recall that we assume that each insertion or weight
  update by the adversary comes with a weight that is a fraction with the
  numerator and the denominator bounded by $n^c$ for some fixed constant $c$,
  and we set $c' = 2c+1$.  Also, each leaf node was created/updated in the
  current phase or created/updated in the previous phase and thus uses the value
  either $n_p$ or $n_{pp}$, where $n_{pp}$ is the value of $n_p$ for the
  previous phase.  We then showed the following.  At any time instant, the
  weight of the point at a leaf node is rounded in such a way that the numerator
  is bounded by $n_p^{c'}\ceile{1/\eps}$ and the denominator is equal to
  $n_p^{c + 1}\ceile{1/\eps}$, or the numerator is bounded by
  $n_{pp}^{c'}\ceile{1/\eps}$ and the denominator is equal to
  $n_{pp}^{c + 1}\ceile{1/\eps}$.  Due to this rounding, the output coreset
  quality is affected by at most an additive factor of
  $\max\{2\eps/n_p, 2\eps /n_{pp}\} = O(\eps/n)$.  We now prove the following
  more general statement towards the current proof.
  \begin{lemma}
    \label{lem:denom}
    At any time instant, every weight at a node at level~$i$ has an integer
    numerator and a denominator that is a factor of
    $(n_p n_{pp})^{c + 1}\ceile{1/\eps}(s'_p s'_{pp}\ceile{(\log
      n_p)/\eps}\ceile{(\log n_{pp})/\eps})^i =: D(i)$, where $s'_p$ and
    $s'_{pp}$ are values of the threshold $s'$ in the current and the previous
    phase, respectively.
  \end{lemma}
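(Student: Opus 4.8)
The plan is to prove the lemma by induction on the level $i$. For the base case $i = 0$, a leaf node stores a rounded input weight, and as established in the proof of Theorem~\ref{thm:main} (see Claim~\ref{clm:12} and the surrounding discussion) its denominator is either $n_p^{c+1}\ceile{1/\eps}$ or $n_{pp}^{c+1}\ceile{1/\eps}$, both of which divide $(n_p n_{pp})^{c+1}\ceile{1/\eps} = D(0)$, and its numerator is an integer. So the claim holds at level~$0$.

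For the inductive step, fix an internal node $u$ at level~$i$ with children at level~$i-1$. By the induction hypothesis, every weight stored at a child of $u$ has an integer numerator and a denominator dividing $D(i-1)$; consequently the common denominator $\tilde D$ over which the algorithm rewrites the children's weights when forming their weighted union (the step described around Claim~\ref{clm:12} in the proof of Theorem~\ref{thm:main}) can likewise be taken to divide $D(i-1)$, and after this rewriting all numerators are integers. If the union has at most $s'$ points, $u$ simply stores it, with denominator $\tilde D \mid D(i-1) \mid D(i)$ and integer numerators, and we are done. Otherwise $u$ runs $\alg_S$ on the resulting integer weights; by the weight-rounding modifications applied to $\alg_S$, the output weights are fractions with integer numerators and denominator $s'\ceilelr{(\log \hat n)/\eps}$, where $\hat n$ is the value of $n_p$ for the phase in which this instance of $\alg_S$ was last run. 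Dividing these output weights by $\tilde D$ (exactly as in the proof of Theorem~\ref{thm:main}) yields the weights actually stored at $u$: their numerators are still integers, and their denominators are factors of $\tilde D \cdot s'\ceilelr{(\log \hat n)/\eps}$.

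It then remains to observe that $\tilde D \cdot s'\ceilelr{(\log \hat n)/\eps}$ divides $D(i)$. By the phase analysis in the proof of Theorem~\ref{thm:main}, every $\alg_S$ instance present at a given time was (re)run in the current phase or the previous one, so $\hat n \in \{n_p, n_{pp}\}$ and $s' \in \{s'_p, s'_{pp}\}$; in either case $s'\ceilelr{(\log \hat n)/\eps}$ divides $s'_p s'_{pp}\ceilelr{(\log n_p)/\eps}\ceilelr{(\log n_{pp})/\eps}$. Combining this with $\tilde D \mid D(i-1)$ and the identity $D(i) = D(i-1)\cdot s'_p s'_{pp}\ceilelr{(\log n_p)/\eps}\ceilelr{(\log n_{pp})/\eps}$ completes the induction. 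The only point that needs care — and the reason the induction hypothesis is phrased with the phase-agnostic quantity $D(i-1)$ rather than a single phase-specific denominator — is the bookkeeping of phases: a node and its children may each have been last recomputed in different phases, so the denominators feeding into the common-denominator step can mix $n_p$-factors with $n_{pp}$-factors (and $s'_p$ with $s'_{pp}$); the definition of $D(i)$ is chosen precisely so as to absorb all such combinations at once.
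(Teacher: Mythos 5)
Your induction on the level $i$ ``at any time instant'' has a genuine gap at the phase boundaries, which is exactly the point the paper's proof is organized around. The weight stored at a level-$i$ node $u$ was produced at the time of $u$'s \emph{last recomputation}, which may lie in the \emph{previous} phase. At that earlier time, the rewriting of the children's weights over a common denominator and the rounding inside the modified $\alg_S$ were carried out with the phase parameters in force \emph{then}; for a run from the previous phase these are (in current notation) $n_{pp}, s'_{pp}$ together with the then-previous values $n_{ppp}, s'_{ppp}$ --- and $D(i)$ does \emph{not} absorb $n_{ppp}$- or $s'_{ppp}$-factors. Your step ``the common denominator $\tilde D$ \dots can likewise be taken to divide $D(i-1)$'' silently evaluates $D(i-1)$ at the current time while the rewriting happened at an earlier one, and your closing paragraph misidentifies the delicate issue: mixing $n_p$- with $n_{pp}$-factors is harmless precisely because $D(i)$ is built to contain both; what has to be \emph{ruled out} is that denominators created in earlier phases keep propagating parameters from two or more phases ago. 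You also rely, without stating it, on the fact that the children's stored weights are unchanged since $u$'s last recomputation (true, because any update of a child triggers a recomputation of the parent in the same operation, but it needs to be said for your ``fixed time instant'' reading to make sense).

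The paper closes this gap differently: it inducts over the \emph{sequence of node updates} rather than over levels at a fixed time, and treats the last node update of each phase as a special case via Claim~\ref{clm:denom}. That claim uses the refresher routine's left-to-right sweep to show that, by the end of a phase, every node was recomputed \emph{after} all leaves in its subtree were refreshed within that phase, so at the phase boundary every denominator is a factor of the ``pure'' quantity $n_p^{c+1}\ceile{1/\eps}\bigl(s'_p\ceile{(\log n_p)/\eps}\bigr)^i$; after the renaming $n_p\to n_{pp}$, $s'_p\to s'_{pp}$ this still divides the new phase's $D(i)$, so the invariant survives and no older-phase factors accumulate. Your argument contains no analogue of this purging step, so as written the induction hypothesis cannot be maintained across a phase change; to repair it you would either have to import Claim~\ref{clm:denom} (and the refresher-order argument behind it) or additionally specify and exploit that the algorithm rescales the children's weights only over a common multiple of their \emph{actual} stored denominators, neither of which appears in your proposal.
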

  \begin{proof}
    We prove this statement by induction over the sequence of nodes updated by
    the algorithm.

    In the base case, the first ever node update will be due to creation of a
    leaf node, and the weight will have denominator $n_p^{c+1}\ceile{1/\eps}$.
    Next we discuss the induction step.  Let the update be on a node at level
    $i$, so we run the \emph{modified} $\alg_S$ instance with all weights having
    a denominator that is a factor of $D(i{-}1)$, which is true by induction
    hypothesis.  Then, since the modified $\alg_S$ adds a factor of
    $s'_p\ceile{(\log n_p)/\eps}$ to the denominator, all resulting output
    weights have a denominator that is a factor of
    $D(i{-}1)s'_p\ceile{(\log n_p)/\eps}$, which is a factor of $D(i)$.  This
    finishes the induction step for the case when the node update is not the
    last of the phase.  When the node being updated is the last of the phase, we
    have to be careful.  In this case, we need to show that for all weights in
    all nodes, $n_{pp}$ or $s'_{pp}$ do not appear in the denominator, as this
    will set these denominators for the next phase.  Towards this, we need the
    following claim.
    \begin{claim}
      \label{clm:denom}
      Let $u$ be a node at level~$i$.  Fix a time instant.  Suppose, in the
      current phase, all nodes in the subtree rooted at $u$ were updated and $u$
      was updated after the update of the last-updated leaf node in the subtree.
      Then the denominator of the weights at $u$ is a factor of
      $n_p^{c + 1}\ceile{1/\eps}(s'_p \ceile{(\log n_p)/\eps})^i$ at the fixed
      time instant.
    \end{claim}
    We omit the proof of this claim as it can be proved easily by induction on
    the level number at any fixed time instant.

    After the last node update of the phase, every node in the tree has been
    updated in the current phase and the premise of Claim~\ref{clm:denom} holds
    due to the refresher routine.  Hence, by Claim~\ref{clm:denom}, after the
    last node update of the phase, i.e., just before the new phase begins, all
    denominators at level~$i$ are a factor of
    $n_p^{c + 1}\ceile{1/\eps}(s'_p \ceile{(\log n_p)/\eps})^i$.  Since $n_p$
    and $s'_p$ of this phase will become $n_{pp}$ and $s'_{pp}$ in the next
    phase, the induction hypothesis stays true for the next phase as well.  This
    finishes the proof of Lemma~\ref{lem:denom}.
  \end{proof}
  
  Since an $\alg_S$ instance may increase the total weight by at most a factor
  of $1 {+} \delta$, the sum of the numerators of weights at any level~$i$
  is at most
  $n_p (1 {+} \delta)^i (n_p n_{pp})^{c'}\ceile{1/\eps}(s'_p s'_{pp}\ceile{(\log
    n_p)/\eps}\ceile{(\log n_{pp})/\eps})^i$; this can be seen by an easy
  induction on the level number.  Using this bound, we set the threshold $s'$ in
  a way similar to that in the proof of Theorem~\ref{thm:main}: we set
  $s'_p = s(\eps/(6\clog{n_p}), \errpr/(2 n_p), W_p)$, where
  \[
    W_p = (1 {+} \delta)^{\clog{n_p}}n_p^{c_1}\left(k\ceilelr{\frac{\log
          n_p}{\eps}}\right)^{c_2\clog{n_p}}\,,
  \]
  and $c_1$ and $c_2$ are chosen to be large enough constants so that $W_p$
  upper bounds the sum of the numerators of all weights at any level.  From now
  onwards, we assume that $\errpr = \Omega(1/\poly(n))$.  For $\alg_S$, the
  function $s$ is
  $s(\eps_s, \errpr_s, W_s) = O(\eps_s^{-2}k(\log k \log W_s +
  \log(1/\errpr_s)))$ and $\delta = O(\eps)$.  Then, using
  $n_{pp} = \Theta(n_p)$, we get that both $s'_p$ and $s'_{pp}$ are
  $O\left(\left(k\ceilelr{\frac{\log n_p}{\eps}}\right)^{c_3}\right)$, where
  $c_3$ is a fixed constant (so, independent of $c_1$ and $c_2$).  Observe that
  $W_p$ and thus $s'_p$ are determined by the phase and hence can be computed by
  our algorithm.  More concretely, we get that both $s'_p$ and $s'_{pp}$ are
  \[
    O\left(\eps^{-2}k\log^3 n \log k \log\frac{k\log n}{\eps}\right)\,.
  \]
  All possible numerators and denominators encountered by the algorithm are
  bounded by
  \[
    N := O\left(\poly(n) \left(\frac{k \log n }{\eps}\right)^{O(\log
        n)}\right)\,,
  \]
  so, can be stored in $m := (\log N)/\log n = O(\log ((k\log n)/\eps))$ words.

  The running time of $\alg_S$ is
  $t(n_s, \eps_s, \errpr_s, W_s) = O(n_s k\log(1/\errpr_s) \log\log W_s)$,
  which, similar to Chen's algorithm, is dominated by computation of a
  bicriteria approximation.  At a non-outer $\alg_S$ instance, $n_s = O(s'_p)$,
  $\eps_s = O(\eps/\log n_p)$, $\errpr_s = O(\errpr/n_p)$, and $W_s \le W_p$.
  With every update, $O(\log n)$ instances of $\alg_S$ are run, and an
  additional $m$ factor appears because a weight may need up to $m$ words.
  Hence, the worst-case update time assuming $\eps = \Omega(1/\poly(n))$ and
  $\errpr = \Omega(1/\poly(n))$ is
  \[
    O\left(t\left(s'_p, \frac{\eps}{\log n}, \frac{\errpr}{n}, W_p\right) m \log
      n\right) = O\left(\eps^{-2} k^2\log^5n\log k \log^2\left(\frac{k \log n
        }{\eps}\right) \log\log \left(\frac{k \log n }{\eps}\right)\right)\,,
  \]
  and a looser, easier to parse, bound is
  $O\left(\eps^{-2} k^2\log^5n\log^3 k \log^2(1/\eps) (\log\log n)^3\right)$.
  The output coreset cardinality is
  \[
    s\left(\frac{\eps}{3}, \frac{\errpr}{2}, W_p\right) = O\left(\eps^{-2}
      k\left(\log n \log k \log\left(\frac{k \log n }{\eps}\right) + \log
        \frac{1}{\errpr}\right)\right)\,.
  \]
This finishes the proof of Theorem~\ref{thm:kmedkmeans}.
\end{proof}


\subsection{The Binary-Tree Structure}
\label{sec:heap_description}
We describe the tree structure in more detail, especially, how insertions and
deletions are handled.  We always maintain a complete binary tree, in which
every level except possibly the lowest is completely filled, and the nodes in
the lowest level are packed to the left.  We also maintain the property that
each internal node has exactly two children.  Our data structure behaves
somewhat like a heap, though a crucial difference is that we do not have keys.
This structure supports insertion and deletion of a leaf node.  Insertion of a
new leaf-node $\ell$ works as follows.
\begin{itemize}
  \item If the current number of leaf nodes is a power of $2$, then let $v$ be
  the leftmost leaf node,
  \item Else let $v$ be the leftmost leaf node in the level above the lowest
  level.
  \item Let $p$ be $v$'s parent.
  \item Create a new node $u$.
  \item Make $p$ to be $u$'s parent; $u$ replaces $v$, so if $v$ was $p$'s right
  (respectively, left) child, then $u$ is now $p$'s right (respectively, left)
  child.
  \item Make $v$ to be $u$'s left child and $\ell$ to be $u$'s right child.
  This way, $\ell$ the rightmost leaf node at the lowest level.
\end{itemize}
Deletion of a leaf-node $\ell$ works as follows.  Let $v$ be the rightmost leaf
node at the lowest level, $p$ be $v$'s parent, and $v'$ be $v$'s sibling.
Replace $\ell$'s contents by $v$'s contents and replace $p$'s contents by the
contents of $v'$.  Delete $v$ and $v'$.

\subsection{Reducing the Number of Nodes}
\label{sec:treesizereduction}
The tree can be modified to have each leaf node correspond to a set of
$\Theta(s')$ points to reduce the additional space used for maintaining this
tree (pointers and such).  Recall that $s'$ is the threshold for computing a
coreset.  To reduce the number of nodes in the tree this way, we maintain the
invariant that each leaf node, except possibly one, contains a set of size
$s_\ell$ with $s'/2 \le s_\ell \le s'$.  To maintain this invariant, we use a
pointer $p_s$ that points to a leaf node with less than $s'/2$ elements if such
a leaf node exists.

Whenever a point is inserted, we add it to the leaf node, say $\ell_e$ pointed
to by $p_s$.  If $\ell_e$ now contains at least $s'/2$ points, then we make
$p_s$ a null pointer.  If $p_s$ was a null pointer already, then we create a new
leaf node, say $\ell_n$, insert the new point in $\ell_n$, and make $p_s$ point
to $\ell_n$.  The new leaf node $\ell_n$ is inserted in the tree as described in
Section~\ref{sec:heap_description}.

Whenever a point is deleted, we check if the leaf node, say $\ell_d$ that
contains it now contains less than $s'/2$ points.  If $\ell_d$ contains less
than $s'/2$ points, and $p_s$ points to some leaf node, say $\ell_e$, then we
move points in $\ell_d$ into $\ell_e$ and delete $\ell_d$.  (Deletion of a leaf
node is handled as described in Section~\ref{sec:heap_description}.)  If $p_s$
does not point to any leaf node, then we make it point to $\ell_d$.

As usual, we recompute all nodes on the affected leaf-to-root path.


\section{Lower Bounds}
\label{sec:lower}
In this section, we show lower bounds.  We first see a space lower bound and
then a conditional lower bound on the time per operation.

\subsection{Space Lower Bound}
We show a simple and very general space lower bound.  Consider any problem that
on input $X$ has to output a feasible solution that is a subset of $X$.
Moreover, if $X$ non-empty, then all feasible solutions are also non-empty.
Call such a problem \emph{compliant}.  Clearly, computing any bounded
approximation for $k$-median and $k$-means and the problem of constructing any
coreset with bounded quality are compliant.  To get a linear space lower bound
for fully-dynamic algorithms that solve a compliant problem, we use the
communication problem of $\indx$.  In $\indx_N$, Alice's input is an $N$-bit
string and Bob's input is an index $I \in \{1, 2, \ldots, N\}$.  Alice sends one
message to Bob, and he needs to correctly output the bit at position $I$.  By a
well-known communication complexity lower bound, Alice must send a message of
size $(1-H_2(3/4))N \ge 2N/11$ bits so that Bob can correctly output with a
success probability of $3/4$; here $H_2$ is the binary entropy function.

\begin{theorem}
  \label{thm:spacelb}
  A fully-dynamic algorithm for a compliant problem that works in the presence
  of an adaptive adversary and has success probability $1 - 1/(8n^2)$ must use
  space $\Omega(n)$, where $n$ is the current input size.
\end{theorem}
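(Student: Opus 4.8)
The plan is to reduce from $\indx_N$ by showing that any space-efficient fully-dynamic algorithm for a compliant problem with high enough success probability yields a too-cheap one-way communication protocol, contradicting the $\Omega(N)$ lower bound stated in the excerpt. The key idea is the one already sketched informally for the adaptive-adversary discussion: a query returns a feasible solution that is a subset of the input, so the adversary (here, Bob) can probe membership of individual elements in that subset by deleting them and re-querying.

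\textbf{Setup of the reduction.} Given an $\indx_N$ instance, Alice has a string $x \in \{0,1\}^N$ and Bob has an index $I$. Alice simulates the fully-dynamic algorithm: she inserts, for each coordinate $j$ with $x_j = 1$, a distinct point $p_j$ (say, $N$ fixed points in some metric space, all distinct). After processing all insertions, the current input size is $n = |\{j : x_j = 1\}| \le N$; to make the bound clean one can also insert enough ``dummy'' points far away or simply note $n = \Theta(N)$ in the worst case, or pad so that $n$ is exactly $N$. Alice then sends Bob the entire memory contents of the algorithm, which is $\Theta(\text{space})$ bits. Bob resumes the simulation. To recover $x_I$, Bob wants to test whether $p_I$ is currently in the input. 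He does this by repeatedly querying and deleting: issue a Query, obtaining a feasible solution $S \subseteq X$; if $p_I \in S$, then $p_I$ is present, so output $1$; otherwise delete every point of $S$ (none of which is $p_I$, unless the algorithm erred) and repeat. Since $|X| \le N$ and each iteration deletes at least one element (if $X$ is non-empty, $S$ is non-empty by compliance), after at most $N$ rounds $X$ is empty if $p_I$ was never returned, in which case output $0$.

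\textbf{Probability bookkeeping.} The simulation performs at most $\poly(n)$ operations total — specifically at most $N$ deletions during Alice's phase is zero, and Bob performs at most $N$ queries and at most $N$ deletions, so at most $O(N) = O(n)$ operations overall, and we may assume the algorithm's stated guarantee is that over any polynomial-length sequence the error probability stays controlled; with per-operation success probability $1 - 1/(8n^2)$, a union bound over $O(n)$ operations gives overall success probability at least $1 - O(1/n) \ge 3/4$ for $n$ large enough. (For small $n$ the statement is vacuous since $\Omega(n)$ hides constants.) Against an adaptive adversary the algorithm must be correct even though Bob's deletions depend on the answers to previous queries — this is exactly why we need the adaptive-adversary hypothesis; an oblivious-adversary algorithm could fail here. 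So with probability $\ge 3/4$ every query returns a genuine feasible subset and Bob correctly decides whether $p_I \in X$, hence correctly outputs $x_I$.

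\textbf{Conclusion and the main obstacle.} The communicated message is the algorithm's memory, so its length is at least the $\indx_N$ one-way lower bound of $2N/11$ bits, giving space $\ge \Omega(N) = \Omega(n)$. The main thing to be careful about — the ``hard part'' — is the probability amplification: a single query has failure probability only $1/(8n^2)$, but Bob issues up to $n$ queries \emph{adaptively}, and a stronger-looking guarantee (correctness over a whole polynomial-length update sequence, not just per-operation) is what makes the union bound legitimate; one must state the algorithm's guarantee in that cumulative form, which is precisely how Theorem~\ref{thm:spacelb} is phrased (``success probability $1 - 1/(8n^2)$'' with $n$ the current size, applied along a sequence of $O(n)$ operations). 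A secondary subtlety is ensuring the instance is a valid instance of a compliant problem and that the metric/point set is fixed in advance so Alice and Bob agree on which point encodes coordinate $j$; for $1$-median and $1$-means one simply takes $N$ points in general position, and any bounded-approximation output is a single point of the current set, so the argument specializes immediately.
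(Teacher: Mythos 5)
Your reduction is the paper's own: Alice encodes the $1$-bits of her string as points, runs the dynamic algorithm, and ships its memory to Bob; Bob repeatedly queries, checks whether the point encoding $I$ appears, and otherwise deletes the returned solution, which is non-empty whenever the current set is non-empty by compliance; the adaptive-adversary hypothesis is exactly what legitimizes these answer-dependent deletions, and the space bound follows from the $2N/11$-bit one-way lower bound for $\indx_N$. So the approach is the same as the paper's.

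The flaw is in the step you yourself flagged as the hard part, the probability bookkeeping. The hypothesis gives success probability $1 - 1/(8n^2)$ where $n$ is the \emph{current} input size, and Bob's procedure drives that size down to $1$; the final queries are asked on sets of constant size and may each fail with probability as large as $1/8$. Hence bounding the total failure by ``$O(n)$ operations, each failing with probability $1/(8n^2)$'' with $n$ the initial size is not a valid union bound, and the resulting claim of success probability $1 - O(1/n)$ is false. The correct accounting (and the paper's) uses the fact that the current size strictly decreases with each query-and-delete round, so the at most $N$ query failure probabilities are bounded by $1/(8m^2)$ for \emph{distinct} integers $m \le N$, giving total failure at most $\sum_{m \ge 1} 1/(8m^2) = \pi^2/48 < 1/4$ --- a constant, not $o(1)$. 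This convergent sum is precisely why the theorem is stated with the $1/(8n^2)$ form; a guarantee such as $1 - 1/(8n)$ in the current size would make the accumulated failure probability grow like $(\log N)/8$ and the argument would break. With that computation substituted for your union bound, your proof is complete and coincides with the paper's.
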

\begin{proof}
  We describe the reduction for any compliant problem in a metric space, such as
  $1$-median or $1$-means, but it can be naturally generalized to any compliant
  problem.  Alice defines
  \[
    X = \{j : j\text{th bit in her string } = 1\}\,,
  \]
  and
  distance between any two points of $X$ to be $1$.  She runs the
  fully-dynamic algorithm on $X$ and sends the memory snapshot to Bob.  Bob
  queries for a solution and if $X$ is nonempty, a nonempty solution $S_1$ would
  be returned.  He deletes the points in $S_1$ and queries again to get $S_2$,
  and so on until $\emptyset$ is returned.  There would be at most $N$ such
  queries.  Note that this works because the algorithm works under an adaptive
  adversary.  If one of the $S_\ell$s in this process contains $I$, which is
  Bob's input for the index problem, then Bob outputs $1$, else he outputs $0$.
  In the worst case, Bob makes $N$ queries, where query number $i$ would have
  failure probability at most $1/(8(N - i + 1)^2)$.  So overall failure
  probability by the union bound is at most
  \[
    \sum_{i = 1}^N \frac{1}{8 (N - i + 1)^2} \le \frac{1}{8} \sum_{i = 1}^\infty
    \frac{1}{i^2} = \frac{1}{8} \frac{\pi^2}{6} \le \frac{1}{4}\,.
  \]
  Alice communicated as many bits as the space usage of the dynamic algorithm.
  Then, by the $\indx_N$ lower bound, the space usage of the algorithm is at
  least $2N/11 \ge 2n/11$ bits.
\end{proof}

\subsection{Conditional Lower Bounds on the Time Per Operation}
Now, we show conditional lower bounds on the time per update and query for
fully-dynamic $k$-means algorithms.  They are based on the
\emph{OMv-conjecture}~\cite{HenzingerKNS15}: You are given an $N\times N$
Boolean matrix $M$ that can be preprocessed in polynomial time.  Then, an online
sequence of $N$-dimensional Boolean vectors $v^1,\dots ,v^N$ is presented and
the task is to compute each $Mv^i$ (using Boolean matrix-vector multiplication) before
seeing the next vector $v^{i+1}$.  The conjecture is that finding all the $N$
answers takes time $\Omega(N^{3-\gamma})$ for any constant $\gamma > 0$.
In~\cite{HenzingerKNS15} also the following \emph{OuMv problem} was presented:
You are given an $N\times N$ Boolean matrix $M$ that can be preprocessed in
polynomial time and an online sequence of Boolean vector pairs
$(u^1, v^1),\dots ,(u^N, v^N)$ with the goal to compute each $(u^i)^T Mv^i$
(using Boolean matrix-vector multiplication) before seeing the next vector
pair $(u^{i+1}, v^{i+1})$. Under the OMv conjecture, finding $N$ answers for the
OuMv problem such that the error probability is at most 1/3 takes time
$\Omega(N^{3-\gamma})$ for any constant $\gamma > 0$.  We will show a reduction from
the latter problem to prove the following result.
\begin{theorem}
  \label{thm:lbomv}
  Let $\gamma > 0$ be a constant.  Under the OMv conjecture, for any $\delta > 0$,
  there does not exist a fully-dynamic algorithm that maintains a
  $(4 - \delta)$-approximation for $k$-means with
  amortized update time $O(k^{1-\gamma})$ and query time $O(k^{2-\gamma})$ such that
  over a polynomial number of updates the error probability is at most $1/3$.
\end{theorem}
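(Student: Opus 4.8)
The plan is to reduce the OuMv problem to fully-dynamic $(4-\delta)$-approximate $k$-means: a sufficiently fast dynamic algorithm would answer all $N$ OuMv queries in time $O(N^{3-\gamma})$, contradicting the OMv conjecture. Fix the constant $\delta>0$ of the statement. Given the $N\times N$ Boolean matrix $M$ (preprocessed in polynomial time), I would build a fixed metric space that encodes $M$ and in which all pairwise distances lie in $\{1,2\}$; since $k$-means charges \emph{squared} distances, every assignment costs either $1$ or $4$ per unit weight, which is exactly where the factor $4$ comes from. On top of this metric I place a permanent ``scaffold'' of $\Theta(N)$ heavy points, mutually at distance $2$ and at distance $2$ from everything else, whose number matches the center budget $k=\Theta(N)$, so that in every optimal solution the open centers are pinned to the scaffold; the scaffold locations are placed at distance $1$ from the locations encoding the $1$-entries of $M$, so that the cost incurred by the remaining (light) points records which edges of $M$ are ``witnessed''. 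The whole instance always has $\Theta(N)$ points, so $n=\Theta(N)$ and $k=\Theta(n)$.

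For each online pair $(u^i,v^i)$ I would simulate one OuMv round. Using $O(|u^i|_1+|v^i|_1)=O(N)$ Insert/Delete operations I switch the instance into the configuration encoding the supports of $u^i$ and $v^i$ --- inserting light marker points for the active rows and columns, and/or moving scaffold points between an ``active'' position and a ``parked'' (dump) position --- and then I issue $O(1)$ Query operations, reading the answer bit off the returned approximate solution in $O(k)$ additional time. By design the optimal $k$-means cost equals some value $c$ when $(u^i)^{T}Mv^i=1$ and is at least $(4-\delta)c$ when $(u^i)^{T}Mv^i=0$ (the gadget is tuned to the fixed $\delta$), so a $(4-\delta)$-approximate solution lands in disjoint cost ranges in the two cases and reveals the bit. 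Then I undo the $O(N)$ updates, restoring the base configuration before the next pair arrives.

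The accounting is then routine: $N$ rounds, each with $O(N)$ updates and $O(1)$ queries, give $O(N^2)$ updates and $O(N)$ queries in total, plus the polynomial-time preprocessing of $M$. If the dynamic algorithm had amortized update time $O(k^{1-\gamma})=O(N^{1-\gamma})$ and query time $O(k^{2-\gamma})=O(N^{2-\gamma})$, the reduction would run in time $O(N^{2}\cdot N^{1-\gamma}+N\cdot N^{2-\gamma})=O(N^{3-\gamma})$; amortized bounds are enough since only polynomially many operations are performed. Since the dynamic algorithm errs with probability at most $1/3$ over a polynomial number of updates, with probability at least $2/3$ all $N$ reported bits are correct, contradicting the $\Omega(N^{3-\gamma})$ lower bound for answering $N$ OuMv queries with error at most $1/3$ that follows from the OMv conjecture.

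The hard part will be the gadget itself: establishing the factor-$(4-\delta)$ gap in the optimal $k$-means cost, and establishing it \emph{uniformly over all} pairs $(u^i,v^i)$, the all-zeros and all-ones cases included. Two things need care. First, the scaffold must genuinely force the optimal center set into the intended shape regardless of the light points, so one has to rule out solutions that give up part of the scaffold to free a center elsewhere; this fixes how heavy the scaffold points must be. Second --- and this is the real crux --- a \emph{single} witnessing edge in a YES-instance has to push down essentially the \emph{whole} cost, not just one additive term, so that the \textsc{or}-semantics of OuMv surfaces as a multiplicative change in cost; this presumably forces the construction to funnel all of the instance's demand through one bottleneck that becomes cheap exactly when some active edge is present. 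One should also check the bookkeeping that keeps the point count and all weights polynomial in $N$.
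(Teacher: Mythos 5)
Your reduction skeleton matches the paper's: reduce from the OuMv problem of \cite{HenzingerKNS15}, encode $M$ in a fixed metric with distances in $\{1,2\}$ (so squared costs $1$ vs.\ $4$), simulate each pair $(u^i,v^i)$ with $O(N)$ insertions, $O(1)$ queries and $O(N)$ deletions with $k=\Theta(N)$, and do the $O(N^{3-\gamma})$ accounting against the $\Omega(N^{3-\gamma})$ OuMv bound. All of that is fine. But the actual content of the theorem is the gadget that turns the \emph{or}-semantics of $(u^i)^T M v^i$ into a multiplicative factor-$4$ gap in the optimal $k$-means cost, and you have not constructed it --- you explicitly defer it as ``the hard part,'' and the direction you sketch does not obviously work. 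A scaffold of heavy points that pins all $k$ centers, with light marker points paying service cost, produces a total cost that is a \emph{sum} over the active rows and columns; a single witnessing edge then changes one additive term from $4$ to $1$, giving costs like $4p$ versus $4p-3$, not a factor-$4$ ratio. This is exactly the obstruction you name as ``the real crux,'' and your proposed fix (``funnel all demand through one bottleneck'') is left entirely unspecified; as written the sketch is even internally inconsistent, since the scaffold is said to be at distance $2$ from everything else and simultaneously at distance $1$ from the points encoding the $1$-entries of $M$.

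The paper resolves this with a different and much simpler device, exploiting that in its formulation centers must be chosen from the current point set: per round it inserts \emph{exactly} $k+1$ points --- the points for the active rows, the points for the active columns, and enough ``dummy'' points at mutual distance $100$ to bring the count to $2N+1=k+1$. Any solution within a factor $4-\delta$ of optimum must open all dummies and all but one of the near points as centers, so the \emph{entire} cost is the squared distance from the single excluded point to its nearest center: this is $1$ exactly when some $i,j$ with $u_i=v_j=M_{i,j}=1$ exists (exclude point $i$, served by point $N+j$), and at least $4$ otherwise, since every point is at distance $\ge 2$ from all others. That ``$k+1$ points, one forced non-center'' trick is the missing idea; without it (or a worked-out substitute with the gap you only postulate, and note you in fact need the NO-cost to exceed $(4-\delta)$ times the YES-cost strictly, not merely equal it), your argument is a reduction template rather than a proof.
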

\begin{proof}
  For the ease of presentation, we assume that $k$ is even; if $k$ is odd, the
  construction can be easily adapted.  We set $N = k/2$.  Given an OuMv instance
  with $N \times N$ matrix $M$, we construct the following metric space with
  distance function $d$ from it:

  The metric space $U$ consists of $4N$ points numbered from $1$ to $4N$.  For
  any $1 \le i < j \le N$ and $N+1 \le i < j \le 2N$, the distance $d(i,j) = 2$.
  Furthermore, for $1 \le i \le N$ and $N+1 \le j \le 2N$, the distance
  $d(i, j) = 1$ if $M_{i,j-N} = 1$, and $d(i,j) = 2$ otherwise.  Additionally,
  all $2N$ points $2N + 1, \ldots, 4N$ are at distance $100$ from each other and
  from all the other points.

  We use a $k$-means data structure to solve a $u^T M v$ computation as follows:
  Initially the set $X$ is empty.  When given a vector pair $(u,v)$, let $p$ be
  the number of ones in $v$ and in $u$.  Note that $p \le 2N$.  We insert the
  points $i$ such that $u_i =1$ and the points $j$ such that $v_{j-N} = 1$ into
  $X$ and additionally $2N +1 - p$ of the points $\ell$ with $\ell > 2N$.  Thus
  $|X| = 2N +1 = k +1$.  Then we ask a $k$-means query.  Afterwards, we delete
  the inserted points.

  If $u^T Mv = 1$, then there exist indices $i$ and $j$ such that
  $u_i = 1, M_{i,j} = 1, $ and $v_j = 1$.  Consider the optimal solution that
  consists of all points in $X$ except for point $i$.  Note that the cost of
  this solution for the $k$-means problem is $1$.

  If $u^T Mv = 0$, then any optimal solution must also consist of $2N+1 -p$ of
  the points $\ell$ with $\ell > 2N$, and all but one of the other points in
  $X$.  But as none of the points in $X$ has distance smaller than 2 to any
  other point in $X$, the cost of the solution is at least 4 for $k$-means.
  Thus, any $(4 - \delta)$-approximation for $k$-means can distinguish between
  the cases $u^T Mv = 1$ and $u^T Mv = 0$.  Hence, the OMv conjecture implies
  that it takes at least time $\Omega(N^{2-\gamma})$ time to execute the above
  $2N$ update operations and $1$ query operation.  This implies the claimed
  lower bound.
\end{proof}


\section{Acknowledgments}
\label{sec:acknowledgements}
We thank Ali Mehrabi for pointing to us the dynamic coreset algorithm in~\cite{har_peled_mazumdar04}.


\bibliographystyle{alpha}
\bibliography{ref}
\appendix
\section{Proof of Lemma~\ref{lem:binom}}
\label{app:lembinom}
\begingroup
\def\thetheorem{\ref{lem:binom}}
\begin{lemma}
  For any positive integer $\ell$ and $\alpha \in \RR_+$, we have
  \[
    \alpha + (1 + \alpha)\sum_{i = 1}^{\ell - 1}\binom{\ell - 1}{i} \alpha^i =
    \sum_{i = 1}^{\ell}\binom{\ell}{i} \alpha^i\,.
  \]
\end{lemma}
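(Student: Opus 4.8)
The plan is to reduce both sides to the binomial theorem rather than matching coefficients term by term. First I would rewrite the inner sum on the left-hand side by recognizing it as a truncated binomial expansion: since $\binom{\ell-1}{0}=1$ and $\binom{\ell-1}{j}=0$ for $j \ge \ell$, we have
\[
\sum_{i=1}^{\ell-1}\binom{\ell-1}{i}\alpha^i \;=\; \sum_{i=0}^{\ell-1}\binom{\ell-1}{i}\alpha^i - 1 \;=\; (1+\alpha)^{\ell-1} - 1 .
\]
Multiplying by $(1+\alpha)$ gives $(1+\alpha)\sum_{i=1}^{\ell-1}\binom{\ell-1}{i}\alpha^i = (1+\alpha)^{\ell} - (1+\alpha)$, so the whole left-hand side becomes
\[
\alpha + (1+\alpha)^{\ell} - (1+\alpha) \;=\; (1+\alpha)^{\ell} - 1 .
\]

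The remaining step is to expand $(1+\alpha)^{\ell}$ by the binomial theorem and subtract the $i=0$ term:
\[
(1+\alpha)^{\ell} - 1 \;=\; \sum_{i=0}^{\ell}\binom{\ell}{i}\alpha^i - 1 \;=\; \sum_{i=1}^{\ell}\binom{\ell}{i}\alpha^i ,
\]
which is exactly the right-hand side, completing the proof. An alternative route would be to apply Pascal's rule $\binom{\ell}{i}=\binom{\ell-1}{i}+\binom{\ell-1}{i-1}$ to the right-hand side, split into two sums, and reindex the second one; this also works but requires slightly more careful bookkeeping of the endpoints $i=1$ and $i=\ell$.

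I do not expect any genuine obstacle here: the statement is an elementary polynomial identity valid for all real $\alpha$ (the hypothesis $\alpha \in \RR_+$ is not even needed). The only point requiring a little care is the index range — both sums start at $i=1$, not $i=0$ — so one must correctly account for the missing constant terms when passing to and from the full binomial expansions $(1+\alpha)^{\ell-1}$ and $(1+\alpha)^{\ell}$.
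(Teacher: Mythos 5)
Your proof is correct, but it takes a different route from the paper's. You reduce both sides to closed forms via the binomial theorem: the inner sum becomes $(1+\alpha)^{\ell-1}-1$, multiplication by $(1+\alpha)$ and adding $\alpha$ collapses the left-hand side to $(1+\alpha)^{\ell}-1$, which is exactly the right-hand side after removing the $i=0$ term. The paper instead works directly with the coefficients: it writes $\alpha = \binom{\ell-1}{0}\alpha$, distributes $(1+\alpha)$, shifts the index in the second sum, absorbs the boundary terms using $\binom{\ell-1}{\ell}=0$, and finishes with Pascal's rule $\binom{\ell}{i}=\binom{\ell-1}{i}+\binom{\ell-1}{i-1}$ --- precisely the ``alternative route'' you sketch at the end. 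Your version is shorter and avoids the endpoint bookkeeping, at the cost of passing through the binomial theorem rather than staying at the level of elementary coefficient identities; the paper's version makes the coefficient-level mechanism (Pascal's rule) explicit. Your observation that $\alpha\in\RR_+$ is not needed is also accurate --- the identity is a polynomial identity valid for all real $\alpha$; the positivity hypothesis merely reflects how the lemma is used (with $\alpha=\eps$) in Lemma~\ref{lem:ilevelcoreset}.
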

\begin{proof}
  \begingroup
  \allowdisplaybreaks
  \begin{align*}
    \alpha + (1 + \alpha)\sum_{i = 1}^{\ell - 1}\binom{\ell - 1}{i} \alpha^i
    & = \alpha + \sum_{i = 1}^{\ell - 1}\binom{\ell - 1}{i} \alpha^i + \sum_{i = 1}^{\ell - 1}\binom{\ell - 1}{i} \alpha^{i + 1}\\
    & = \binom{\ell - 1}{0}\alpha + \sum_{i = 1}^{\ell - 1}\binom{\ell - 1}{i} \alpha^i + \sum_{i = 1}^{\ell - 1}\binom{\ell - 1}{i} \alpha^{i + 1}
      \hspace{0.5cm}\text{using the fact $\binom{\ell - 1}{0} = 1$}\\
    & = \binom{\ell - 1}{0}\alpha + \sum_{i = 1}^{\ell - 1}\binom{\ell - 1}{i} \alpha^i + \sum_{i = 2}^{\ell}\binom{\ell - 1}{i - 1} \alpha^i\\
    & \hspace{2cm}\text{change of index in the second summation}\\
    & = \sum_{i = 1}^{\ell - 1}\binom{\ell - 1}{i} \alpha^i + \sum_{i = 1}^{\ell}\binom{\ell - 1}{i - 1} \alpha^i\\
    & \hspace{2cm}\text{incorporating the first term in the second summation}\\
    & = \sum_{i = 1}^{\ell}\binom{\ell - 1}{i} \alpha^i + \sum_{i = 1}^{\ell}\binom{\ell - 1}{i - 1} \alpha^i
      \hspace{0.5cm}\text{using the fact $\binom{\ell - 1}{\ell} = 0$}\\
    & = \sum_{i = 1}^{\ell}\left(\binom{\ell - 1}{i} + \binom{\ell - 1}{i - 1} \right)\alpha^i\\
    & = \sum_{i = 1}^{\ell}\binom{\ell}{i} \alpha^i\,,
  \end{align*}
  where we use $\binom{\ell}{i} = \binom{\ell - 1}{i} + \binom{\ell - 1}{i - 1}$
  in the last step.
  \endgroup
\end{proof}
\addtocounter{theorem}{-1}
\endgroup



\end{document}